\def\maxwidth{ %
  \ifdim\Gin@nat@width>\linewidth
    \linewidth
  \else
    \Gin@nat@width
  \fi
}
\definecolor{fgcolor}{rgb}{0.345, 0.345, 0.345}
\newcommand{\vb}[1]{\textcolor{blue}}
\definecolor{shadecolor}{rgb}{.97, .97, .97}
\definecolor{messagecolor}{rgb}{0, 0, 0}
\definecolor{warningcolor}{rgb}{1, 0, 1}
\definecolor{errorcolor}{rgb}{1, 0, 0}
\definecolor{alizarin}{rgb}{0.82, 0.1, 0.26}
\newenvironment{knitrout}{}{} % an empty environment to be redefined in TeX
\def\R{{\rm I} \negthinspace {\rm R}}
\newtheorem{theorem}{Theorem}[section]
\newtheorem{lemma}[theorem]{Lemma}
\newtheorem{assumption}{Assumption}
\newtheorem{remark}[theorem]{Remark}
\theoremstyle{definition}
\newtheorem{definition}[theorem]{Definition}
\def\bY{{\mathbf Y}} 
\def\R{{\mathbb R}}  %%
\def\p{{\mathbb P}}  %%
\def\E{{\mathbb E}}  %%
\def\VaR{{\textnormal{VaR}}}
\def\TVaR{{\textnormal{TVaR}}}
\newcommand{\cF}{\mathcal F}
\newcommand{\RM}{\textnormal{RM}}
\def\bY{{\mathbf Y}}
\theoremstyle{definition}
\newenvironment{example}
{\pushQED{\qed}\examplex}
{\popQED\endexamplex}
\DeclareMathOperator*{\argmin}{arg\,min}
\def\dTVaR{{\textnormal{dTVaR}}}
\title{\bf Two step valuations}
\begin{document}
	
\title{\LARGE \textbf{Insurance valuation: A two-step generalised regression approach}}
\date{Version: \today }
\author[1]{\large Karim Barigou\footnote{E-mail address: \href{mailto:karim.barigou@univ-lyon1.fr}{karim.barigou@univ-lyon1.fr} (Corresponding author)}}
\author[2]{\large Valeria Bignozzi\footnote{E-mail address: \href{mailto:valeria.bignozzi@unimib.it}{valeria.bignozzi@unimib.it}}}
\author[3]{\large Andreas Tsanakas\footnote{E-mail address: \href{mailto:a.tsanakas.1@city.ac.uk}{a.tsanakas.1@city.ac.uk}}} 
\affil[1]{\footnotesize ISFA, Univ Lyon 1, UCBL, LSAF EA2429, F-69007, Lyon, France}
\affil[2]{\footnotesize Department of Statistics and Quantitative Methods, University of Milano-Bicocca, Italy}
\affil[3]{\footnotesize {Bayes Business School}, City, University of London, UK} 
\maketitle

\begin{abstract}
	
	Current approaches to fair valuation in insurance often follow a two-step approach, combining quadratic hedging with application of a risk measure on the residual liability, to obtain a cost-of-capital margin. In such approaches, the preferences represented by the regulatory risk measure are not reflected in the hedging process.  We address this issue by an alternative two-step hedging procedure, based on generalised regression arguments, which leads to portfolios that are neutral with respect to a risk measure, such as Value-at-Risk  or the expectile.  First, a portfolio of traded assets aimed at replicating the liability is determined by local quadratic hedging. Second, the residual liability is hedged using an alternative objective function. The risk margin is then defined as the cost of the capital required to hedge the residual liability. In the case quantile regression is used in the second step, yearly solvency constraints are naturally satisfied; furthermore, the portfolio is a risk minimiser among all hedging portfolios that satisfy such constraints. We present a neural network algorithm for the valuation and hedging of insurance liabilities based on a backward iterations scheme. The algorithm is fairly general and easily applicable, as it only requires simulated paths of risk drivers.
	
	\textbf{Keywords}: Market-consistent valuation, Quantile regression, Solvency II, Cost-of-capital, Dynamic risk measurement.
	
\end{abstract}

\newpage

\section{Introduction}
 
 Fair valuation of insurance liabilities has become a fundamental feature of
 modern solvency regulations in the insurance industry, such as the Swiss
 Solvency Test, Solvency~II and C-ROSS (Chinese solvency regulation), see e.g. \cite{solvency2009directive}. Broadly speaking, insurance regulations distinguish between liabilities that are completely replicable in deep, liquid and transparent markets and liabilities for which this is not possible. In the first case, by no-arbitrage arguments, the fair value should correspond to the initial cost of the replicating portfolio. Otherwise, the fair value is defined as the sum of the expected present value (called best-estimate) and a risk margin that is based on cost-of-capital arguments. 
  
 When insurance liabilities are  a combination of traded and non-traded risks, as is often the case, one cannot classify them as perfectly replicable or non-replicable. In this context, it is not  evident how the regulatory valuation should proceed and the valuation is therefore usually performed in two steps. In a first step, a hedging portfolio for the liabilities is set up, based on the available traded assets via typically a quadratic objective function \citep{wuthrich2013financial,pelsser2016difference,wuthrich_2016}. In a second step, a risk margin is added to account for the residual risk via a risk measure or an actuarial principle \citep{embrechts2000,happ2015best,dhaene2017fair}. \textcolor{black}{\cite{mohr2011market} proposed a valuation framework based on replication over multiple one-year
 	time periods by a periodically updated portfolio of assets. In that framework, the split of the total asset portfolio into the value of insurance liabilities and capital funds is based on an acceptability condition related to the expected return for capital investor (see also \citealp{engsner2017insurance,engsner2020value}).} 
  
 As far as the hedging procedures are concerned, several objective functions have been proposed in the literature. Some papers considered maximising the expected utility of the hedger (\citealp{henderson2004utility}), using indifference arguments (\citealp{moller2003indifference}) or minimising the risk by quadratic hedging (\citealp{schweizer1995variance,moller2001hedging,moller2001risk,delong2019fair2}). The major drawback of using a quadratic penalty function is that it penalises equally gains and losses. Furthermore, quadratic hedging, leading to residuals with zero expectation, is divorced from the preferences encoded in regulation, which require neutralisation of a risk measure such as Value-at-Risk (VaR). \cite{follmer2000efficient} and \cite{francois2014optimal} proposed to use general expected penalties that only penalise losses. \cite{follmer1999quantile} defined the quantile hedging scheme, which maximises the probability that the hedging loss does not exceed a certain threshold at maturity, given an initial capital. In our paper, by \textit{quantile hedging}, we mean hedging with a quantile regressor as considered in \cite{koenker1978regression} which is a related but different objective compared to the quantile hedging of \cite{follmer1999quantile}.
 
\textcolor{black}{ We propose a valuation approach for insurance liabilities, which addresses the following considerations:
 \begin{enumerate}
 \item We consider hedging strategies that produce a residual which has zero tail risk, as measured by a VaR or Expectile criterion. Hence the trading strategy is directly tied to the regulatory criterion that insurers have to satisfy and does not artificially restrict them to invest capital in risk-free assets only, if the option of a more risk-sensitive investment exists.
   \item The use of such  trading strategies mitigates potential regulatory arbitrage arising from the use of VaR. It is known that the requirement for VaR-neutrality of portfolios can be achieved by shifting risk to the extreme tail (e.g. \citealp{danielsson2002emperor}). We avoid such perverse incentives by the requirement of minimising a convex loss function; hence we are not only concerned with the zero-VaR residual property, but also with how this is achieved.
 \item Fair values are generated by apportioning the hedging cost to policyholders and shareholders. \textcolor{black}{Specifically, we assume that shareholders contribute a part of the hedging cost, interpreted as risk capital, while policyholders contribute a part  that covers liabilities on average and, in addition, compensate shareholders for their cost of capital. } The resulting fair values generalise current regulatory valuation formulas. Hence, rather than introducing a radically new approach, we build on current practice, allowing for more risk-sensitive trading strategies. 
\item \textcolor{black}{In a multi-period setting, we only consider the case of a terminal liability, and not of cash flows of payments before maturity. We make the (strong) assumption that at intermediate times when the portfolio is re-balanced, insurers are able to raise capital from shareholders as needed, with respective capital costs reflected as part of the valuation.}
  \end{enumerate}}

First, we introduce a new valuation framework for the multi-period fair valuation of insurance liabilities based on a two-step hedging procedure. The framework we present makes use of sequential local quadratic and quantile risk-minimising strategies to take into account all intermediate solvency requirements. By considering a local approach, not only we focus on the loss at maturity but also on the difference between the hedging portfolio and the liability value at intermediate times (for instance on a yearly basis) which is of paramount importance in a regulatory context given yearly solvency constraints. Moreover, by switching from a quadratic to an alternative loss function, we target the tail risk rather than the mean of the residual loss and therefore account explicitly, as part of the hedging and valuation process, for those extreme events that drive capital requirements.
 
 The two-step approach can be summarised as follows. In a first step, a portfolio of traded assets aimed at replicating the liability as closely as possible (in the quadratic sense) is determined similar to \cite{follmer1988hedging}. Such portfolio replicates the liability \textit{on average}. In a second step, the residual liability is managed by generalised error functions, which are associated with setting different statistics of residual loss (e.g. VaR or Expectile) to zero. In particular, our focus is on local quantile hedging using the asymmetric Koenker-Basset error (cf. \citealp{koenker1978regression}) given the use of VaR in regulation. The resulting hedging portfolio appears as a Tail Value-at-Risk  deviation risk minimiser among all portfolios which satisfy VaR constraints (\citealp{rockafellar2008risk}). Hence, one can achieve a better risk management by quantile-hedging the residual risk rather than setting up a VaR capital buffer (cf. Lemma \ref{lemmatvar}). 
 
 Second, the fair value is then defined as the sum of the cost of the quadratic hedging portfolio and the cost-of-capital for the quantile hedging of the residual liability. \textcolor{black}{This construction of fair values reflects an economic framework whereby an insurer's shareholders contribute the risk capital, understood as the cost of the quantile hedging of the second step, while policyholders contribute the cost of the quadratic hedging strategy of the first step plus a compensation for shareholders' capital investment. Hence, policyholder contributions ensure that the portfolio is neutral in expectation, in line with standard provisioning conventions, while shareholder capital ensures VaR-neutrality, thus reflecting regulatory criteria.} \textcolor{black}{The policyholder contribution ensures that shareholders, who capitalise the insurer to a level prescribed by regulation, are thus compensated for the risk they bear. In this way, the proposed approach to some extent integrates regulatory and economic aspects of valuation.}
  
 Third, we propose a general algorithm for the valuation and hedging of insurance liabilities based on a backward iteration scheme. Standard implementation of local hedging strategies requires dynamic programming in discrete time, which leads to high computational times (see \citealp{cerny2004dynamic} and \citealp{augustyniak2017assessing}). Some recent papers proposed machine-learning based algorithms to speed up the global hedging problem (\citealp{fecamp2019risk,carbonneau2020equal,carbonneau2021deep}) but we are not aware of similar algorithms for local quantile hedging. Contrary to the standard dynamic programming approach, our algorithm does not present a nested structure and only requires sample paths of the main risk drivers. This is of practical importance, as typically the stochastic asset models used in insurance do not have simple tractable forms; instead insurers have access to the output of Economic Scenario Generators (\citealp{varnell2011economic}), which provide precisely a matrix of sample paths for multiple asset classes. In this paper, we focus on a neural network implementation for quantile hedging, but the algorithm remains valid with other non-linear regression and loss functions. Furthermore, this paper provides a practical implementation to generate future paths for the fair values and their corresponding hedging portfolios.
 
 The paper is organised as follows. Section \ref{section2} starts by motivating our two-step valuation approach in a one-period setting and its connection to the Solvency II regulatory framework. Moreover, we show that the two-step valuation is market-consistent and actuarial in the sense of \cite{dhaene2017fair}. In Section \ref{sectionmultiperiod}, we generalise the valuation approach in a dynamic multi-period setting by sequential risk-minimisation. Section \ref{sectionneuralnetwork} presents a general procedure for implementing the dynamic hedging problem and proposes a neural network algorithm based on Monte-Carlo simulations of the financial and actuarial risk drivers. Section \ref{numericalexamples} provides a detailed numerical example illustrating the neural network approximation. Finally, brief
 conclusions are given in Section \ref{conclusion}.

   \section{Fair valuation in a one-period setting} \label{section2} 
  
 We start by investigating the one-period case. Let $(\Omega,\mathcal{F}, \mathbb{P})$ be a probability space and denote $\mathcal{C}\subseteq L^2(\Omega,\mathcal{F},\mathbb{P})$ the set of all claims with maturity $T=1$.  We assume that the financial market consists of    asset $0$, which is risk-free with   deterministic interest rate $r\geq 0$, and  $n$   risky assets. The vectors $\mathbf{y}=(1,y_1,\ldots,y_n)$ and  $\mathbf{Y}=\left(Y_0,Y_1,\dots, Y_n\right)$ respectively represent the asset values  at time $0$ and time $1$ where $Y_0=e^r$, $y_i\geq 0,~Y_i\in\mathcal{C}$ for any $i=1,2,\ldots,n$. A trading strategy $\bm \beta=(\beta_0,\ldots,\beta_n)$ is a real-valued vector where $\beta_i$ provides the  units of capital invested in asset $i$  at time $0$. We assume that the strategy is not modified over time and denote $\mathcal{B}={\mathbb{R}^{n+1}}$  the set of all trading strategies. The value of the trading strategy at time $0$ and $1$ is obtained as
 $$
\bm \beta\cdot\mathbf{y}=\sum_{i=0}^n\theta_i\cdot y_i \quad\textrm{and}\quad\bm \beta\cdot\mathbf{Y}=\sum_{i=0}^n\theta_i\cdot Y_i. 
 $$
 
 We further assume that all the assets are non redundant, i.e. $\bm \beta\cdot \mathbf{Y}=0$ if and only if $\bm \beta=(0,\ldots,0)$\footnote{This assumption ensures that the quadratic minimisation problem has a unique solution, see e.g. \cite{cerny2009hedging}} and  that any tradable asset can be bought and/or sold in any quantity in a deep, liquid and transparent market with negligible transactions costs and other market frictions. All inequalities between random variables  are understood to hold $\mathbb{P}$-almost surely.
  
  \subsection{Two-step valuation with a quadratic loss function} Solvency regulations require  a \textit{fair valuation} of assets and liabilities, that is, their value should correspond to the amount for which they could be transferred to another company or exchanged on the market.\footnote{Article 75 in the Solvency II Directive: ``Assets and liabilities shall be valued at the amount for which they could be transferred, or settled, between knowledgeable willing parties in an arm’s length transaction.'' As pointed out by a referee, such ``transfer valuation'' is not the only plausible criterion, as one could focus on the fulfillment rather than the transfer of liabilities.} For this reason the valuation of a contingent claim  strictly depends on whether it is tradable on  the financial market. 
  
  Similar to \cite{dhaene2017fair}, we denote by $\mathcal{C}^h\subseteq \mathcal{C}$ the class of claims perfectly hedgeable on the market. For any $S\in \mathcal{C}^h$ it is possible to find a strategy $\bm \beta\in\mathcal{B}$ such that  $S=\bm \beta\cdot \mathbf{Y}$. In this case, the fair value of the liability $S$ is simply given by the value of the trading strategy at time $0$, $\bm \beta\cdot\mathbf{y}$.
  
Moreover, we denote by $\mathcal{C}^\perp\subseteq \mathcal{C}$ the class of claims independent of the vector  of traded assets $\mathbf{Y}=\left(Y_0,Y_1,\dots, Y_n\right)$. For such claims, the position of the insurer cannot be hedged in the financial market and therefore
  the fair value of $S\in\mathcal{C}^\perp$ is calculated by an actuarial premium principle. In a solvency framework, with capital requirements calculated according to the Value-at-Risk ($\VaR$) risk measure, the standard choice is the cost-of-capital premium principle, see for instance \cite{solvency2009directive} and \cite{pelsser2011}. It is defined as follows:
\begin{equation}\label{CoC PP}
	\pi(S)=e^{-r}\ \mathbb{E}\left[S\right]+e^{-r}~  i~(\VaR_\alpha (S)-\mathbb{E}\left[S\right]),
\end{equation}
where $i\in(0,1)$ is the cost-of-capital rate and  
$$
\VaR_\alpha(X)=\inf\{x\in\R~|\p(X\leq x)\geq \alpha\}, \quad \textrm{for any risk } X{\in\mathcal{C}}~\textrm{and  } \alpha\in(0,1).$$ Thus, $\pi(S)$ is understood as the expected value (net premium) of $S$, loaded by the cost of the capital required to hold the liability $S-\mathbb{E}[S]$. Specifically, it is assumed that the insurer's shareholders require a return $i$ on $\VaR_\alpha (S-\mathbb{E}\left[S\right])$, that is, on the assets required to support $S$, net of the expected value. We note that the premium principle \eqref{CoC PP} is not appropriate for claims that depend on traded assets as this would neglect the hedging opportunities.

Many claims that insurance companies face are not perfectly hedgeable, but nevertheless not independent of the payoffs of the traded assets. We call these claims \textit{hybrid claims} when $S\in \mathcal{C}\backslash(\mathcal{C}^h\cup \mathcal{C}^\perp)$ and these are the focus of our paper.

 In that case, some (generally imperfect) hedging of $S$ by $\mathbf Y$ is possible and typically a two-step approach is followed (see \citealp{mohr2011market} and \citealp{albrecher2018asset}). First, the insurer determines a hedging portfolio that is as close as possible to the liability $S$. To measure ``closeness'' {the quadratic loss function is generally used}, providing a trading strategy \textcolor{black}{$\bm \theta_S=(\theta_0,\ldots,\theta_n)$} that minimises the $L^2$-distance between the liability and the hedging portfolio (see \citealp{pelsser2016difference}): 
 \begin{equation}
\bm \theta_S=\arg \min_{\bm \beta\in\mathcal{B}} \mathbb{E}\left[(S-\bm \beta \cdot \mathbf{Y})^2\right].\label{quadratic}
 \end{equation}
 For the rest of the paper $\bm\theta_S$ will always denote the trading strategy associated with the quadratic loss function and the index $S$ will be dropped when no confusion is possible. Standard least-squares arguments provide the solution to problem \eqref{quadratic}, $\bm \theta=(\mathbb{E}\left[\mathbf{Y}^\intercal \cdot\mathbf{Y}\right])^{-1}\mathbb{E}\left[S\cdot\mathbf{Y}^\intercal\right]
$\footnote{\textcolor{black}{The vector $\mathbf{Y}^\intercal$ represents the transpose of the asset price vector $\mathbf{Y}$ and the non-redundancy guarantees the existence of the inverse, see Theorem 5 in \cite{dhaene2017fair} for more details.}} and ensure that the expected value of the hedging portfolio matches the expected value of the liability: 
\begin{equation}
	\mathbb{E}\left[\bm \theta \cdot \mathbf{Y}\right]=	\mathbb{E}\left[S\right].\label{expproperty}
\end{equation}

Second, the insurer values the residual risk $R(S,\bm \theta):=S-\bm \theta \cdot \mathbf{Y}$, which could not be hedged, by the cost\textcolor{red}{-}of\textcolor{red}{-}capital principle $\pi$. Following such a method, the fair value $\phi(S)$ is calculated as the sum of the cost of the hedging portfolio and the premium principle of the residual risk (see \citealp{dhaene2017fair}):
\begin{align}
\phi(S)&=\bm \theta\cdot \mathbf{y}+\pi (R(S,\bm \theta)) \nonumber\\
&=\bm \theta\cdot \mathbf{y}+e^{-r} i \VaR_\alpha(S-\bm \theta\cdot \mathbf{Y}),\label{rhococ}
\end{align}
where we used the property \eqref{expproperty}. 

The fair value \eqref{rhococ} can be seen as a generalisation of the premium principle \eqref{CoC PP}, where (a) the net premium $\mathbb E[S]$ is replaced by the cost of the hedging strategy $\bm \theta\cdot \mathbf{y}$, which once again matches on average the liability $S$ and (b) the cost of capital is calculated on the residual $S-\bm \theta\cdot \mathbf{Y}$, rather than $S-\mathbb E[S]$.

\subsection{Valuation with general loss functions}

The valuation approach we just discussed relies on the use of a quadratic loss function $\ell(x)=x^2$ to penalise deviations of the hedging portfolio payoff from the liability. Here, we generalise the two-step valuation approach, addressing two specific concerns:
\begin{itemize}
    \item The quadratic loss function penalises losses and gains equally. As an insurer, the major concern is to avoid a shortfall, namely situations where $S>\bm \theta \cdot \mathbf{Y}$. Various authors (for instance \citealp{follmer2000efficient} and \citealp{francois2014optimal}) proposed alternative penalty functions that only penalise losses or penalise losses and gains asymmetrically.
    \item The total level of assets that the insurer has to hold with respect to their liabilities is typically given by a risk measure, e.g. in the case of Solvency II, $\VaR_{0.995}$. It is then not obvious why a quadratic hedging strategy should be used, which results in a residual risk with mean zero, rather than a hedging strategy that produces a $\VaR_{0.995}$-neutral portfolio.
\end{itemize}

To elaborate on these points, consider a convex loss function $\ell:\R\to[0,+\infty)$, \textcolor{black}{with $\ell(x)=0$  if and only if $x=0$.} The resulting hedging strategy \textcolor{black}{$\bm \xi^{(\ell)}_S=(\xi^{(\ell)}_0,\ldots,\xi^{(\ell)}_n)$ is defined as the minimiser of the expected loss function}
 \begin{equation}
{\bm \xi^{(\ell)}_S}=\arg \min_{\bm \beta\in\mathcal{B}} \mathbb{E}\left[\ell(S-\bm \beta \cdot \mathbf{Y})\right]. \label{regr general}
 \end{equation}
Again, we drop the subscript to write ${\bm \xi^{(\ell)}}$, if no confusion ensues. Different choices of the loss function lead to the risk-neutrality (or unbiasedness) of the residual risk with respect to different risk measures. 
Specifically, under mild conditions  we have that \citep[see Thm 3.2 in][]{rockafellar2008risk}
$$
\Gamma^{(\ell)}(S-{\bm \xi^{(\ell)}}\cdot \mathbf{Y})=0,
$$
where $\Gamma^{(\ell)}$ is the risk measure given by 
\begin{equation}\label{def:shortfall}
\Gamma^{(\ell)}(X)=\arg \min_{c \in \mathbb R } \mathbb{E}\left[\ell(X-c)\right],\quad \textrm{for any }X\in\mathcal{C}. \end{equation}
Slightly different versions of risk functionals as defined in \eqref{def:shortfall} are treated in the  literature using a first order condition to find the minimiser in \eqref{regr general}, see for instance the \textit{zero-utility premium principles} discussed in \cite{deprez1985convex}, the class of \textit{shortfall risk measures}
introduced by \cite{follmer2002convex}, the \textit{generalised quantiles} investigated by \cite{bellini2014generalized}, the {optimised certainty equivalent} in \cite{ben2007old}, the \textit{elicitable functionals} studied in \cite{gneiting2011making} and more recently the class of \textit{convex hedgers} by \cite{dhaene2017fair}. The implications of different choices of $\ell$ are elaborated on in detail by \cite{rockafellar2013quadrangle}. Three important examples are:
\begin{enumerate}
    \item If $\ell(x)=x^2$, we have that $\Gamma^{(\ell)}(X)=\mathbb E[X]$ and ${\bm \xi^{(\ell)}}= \bm\theta$, as seen before.
    
    \item Let $\ell(x):= \ell_{\alpha}(x)$, where
    \begin{equation}
     \ell_{\alpha}(x)=\frac{\alpha}{1-\alpha}x_++x_-,\qquad \alpha\in(0,1)\label{penfunction}
    \end{equation}
with \begin{align*}
x_{+}&=\max(x,0)\\
x_{-}&=\max(-x,0),
\end{align*}
    is the normalised Koenker-Bassett loss function, see \cite{Koenker2005}. Then, $\Gamma^{( \ell_{\alpha})}(X)=\mathbb \VaR_\alpha(X)$. We henceforth denote the trading strategy associated with this loss function by ${\bm \xi^{(\ell_\alpha)}}:= \bm \xi$. This strategy satisfies 
  \begin{equation}\label{VaRneutrality}
\VaR_\alpha \left(S-\bm \xi \cdot \mathbf{Y}\right)=0.
\end{equation}
\noindent This case is important to us, given the desired feature that the VaR of the residual risk is zero, indicating that sufficient assets have been allocated to satisfy regulatory requirements. 
    
    \item Alternatively, with slight abuse of notation, consider the loss function $\ell(x):=\ell_\tau(x)$ where 
    \begin{equation*}
    \ell_{\tau}(x)=\tau (x_+)^2 + (1-\tau)(x_-)^2, \quad \tau \in (0,1).\label{expectileloss}
    \end{equation*}
    The resulting risk measure \textcolor{black}{$$\Gamma^{(\ell_\tau)}(X)=\arg \min_{c \in \mathbb R } \mathbb{E}\left[\tau ((X-c)_+)^2 + (1-\tau)((X-c)_-)^2\right]$$}is the $\tau$-\textit{expectile}. Expectiles generalise the mean (which is obtained by setting $\tau=0.5$) and, for $\tau\geq 0.5$, are coherent risk measures, thus addressing a common criticism of VaR, while remaining within the tractable class of shortfall risk measures \citep[see for instance][]{delbaen2016risk}. One can see hedging with $\ell_\tau(x)$ as a modification of quadratic hedging, where, for $\tau>0.5$ additional weight is given to the downside risk.  Expectile regression was introduced
    by \cite{NeweyPowell} and then further generalised to $M$-quantiles by \cite{breckling1988m}. We return to \textit{expectile hedging strategies} in Section \ref{sectiontwostep1period}.
\end{enumerate}

\textcolor{black}{By changing the hedging objective from a
quadratic to an asymmetric loss function we move the focus on  positive deviations of the hedging portfolio from the liability.  As a result, the residual risk changes from
having zero mean to having a zero VaR or Expectile. Specifically, by \eqref{VaRneutrality}, if we set up the strategy $\bm \xi$ for $\alpha=0.995$, by construction, the portfolio will cover the liability $S$ with probability $\alpha=0.995$. In this paper, we focus on the use of the quantile hedging strategy 
\begin{equation}
    \bm \xi = \arg \min_{\bm \beta\in\mathcal{B}} \mathbb{E}\left[ \ell_{\alpha}(S-\bm \beta \cdot \mathbf{Y})\right], \label{quantilehedging}
\end{equation}
where $\ell_{\alpha}(x)$ is given in \eqref{penfunction}.} The regression problem \eqref{quantilehedging} is well-known as the quantile regression pioneered by \cite{koenker1978regression}. As we use it for hedging objectives, we refer to this minimisation as \textit{quantile hedging}; this should not be confused with the quantile hedging of \cite{follmer1999quantile} which targets a different objective. 

Moreover, as the quantile hedging strategy is a risk minimiser with respect to a convex loss function, we will avoid situations where capital requirements can be reduced by shifting losses to the extreme tails, beyond the VaR level. Hence one of the key criticisms of VaR, see e.g. Section 4.4 of \cite{danielsson2002emperor}, is addressed, as  illustrated in the following example.

\begin{example}\label{example: regulatory arbitrage}
We consider a simple example, where there is only one risky asset  correlated with the liability $S$. Assume that the risk measure used is $\VaR_{0.9}$ and that the risk-free asset  has return 1 (zero interest rates, in particular $y_0=Y_0=1$). The liability $S$ follows a Lognormal distribution with parameters $\mu=0.1$ and $\sigma=0.3$, such that  $\VaR_{0.9}(S)=1.623$. The asset $Y_1$ is a derivative on $S$, with a price of $y_1=1$ and pay-off:
$$
Y_1= 1.5 \cdot \mathbf 1_{\{S\leq \VaR_{0.9}(S)\}} - 3 \cdot \mathbf 1_{\{S> \VaR_{0.9}(S)\}}.
$$
Hence, the derivative offers a high return when $S$ is less than $\VaR_{0.9}(S)$, but produces an even larger loss when $S$ exceeds $\VaR_{0.9}(S)$. Investment in such an asset would be capital efficient, as it moves the loss beyond the 90\% confidence level, thus making it `invisible' to $\VaR_{0.9}$. On the other hand, prudent risk management would require the holder of $S$ to \emph{short} $Y_1$, in order to be able to hedge their tail risk. 

To make these considerations precise, we look at two investment strategies:
\begin{itemize}
	\item[$\bm A$] Invest $\beta_0= 0$ in the risk-free asset $Y_0$ and $\beta_1=\VaR_{0.9}(S)/1.5=1.082$ in $Y_1$. The resulting portfolio Value-at-Risk is:
	\begin{align*}
	\VaR_{0.9}(S-\beta_1Y_1)&=\VaR_{0.9}\left(S-\VaR_{0.9}(S)\left(\mathbf 1_{\{S\leq \VaR_{0.9}(S)\}} - 2 \cdot \mathbf 1_{\{S> \VaR_{0.9}(S)\}}\right)\right)\\
	&= \VaR_{0.9}\left(S-\VaR_{0.9}(S)+3\VaR_{0.9}(S) \mathbf 1_{\{S> \VaR_{0.9}(S)\}}\right)\\
	&=\VaR_{0.9}\left(S-\VaR_{0.9}(S)\right)+3\VaR_{0.9}(S) \VaR_{0.9}\left(\mathbf 1_{\{S> \VaR_{0.9}(S)\}}\right)\\
	&=0,
	\end{align*}
	where the 3rd equality is by comonotonic additivity of VaR.
	\item[$\bm B$] Invest $\bm\xi =(\xi_0,\xi_1)$ in the assets $(Y_0, Y_1)$ , where $\bm\xi$ is the quantile hedging strategy at level $\alpha=0.9$. By construction we have
	\begin{align*}
	\VaR_{0.9}(S-\xi_0-\xi_1Y_1)&=0.
	\end{align*}
	The corresponding optimal positions are 
	\begin{align*}
	\xi_0&=1.697,\\
	\xi_1&=-0.174.
	\end{align*}
	Hence, it is indeed seen that a negative exposure to $Y_1$ is produced. Furthermore, the \emph{cost} of this strategy is equal to $\xi_0+\xi_1=1.523 >1.082=\beta_1$. Hence the Strategy B is more expensive, while achieving the same zero VaR for the portfolio as Strategy A.
\end{itemize}

We compare these two strategies with respect to the cumulative distribution function (cdf) of their residual risks in Figure \ref{fig:regulatory arbitrage}. The persistent tail risk arising from Strategy A is clearly visible, indicating how this strategy, while cost efficient, reflects poor (unethical even) risk management. This is not a problem we face with the quantile hedging Strategy B.

\begin{figure}
    \centering
    \includegraphics[width=\maxwidth]{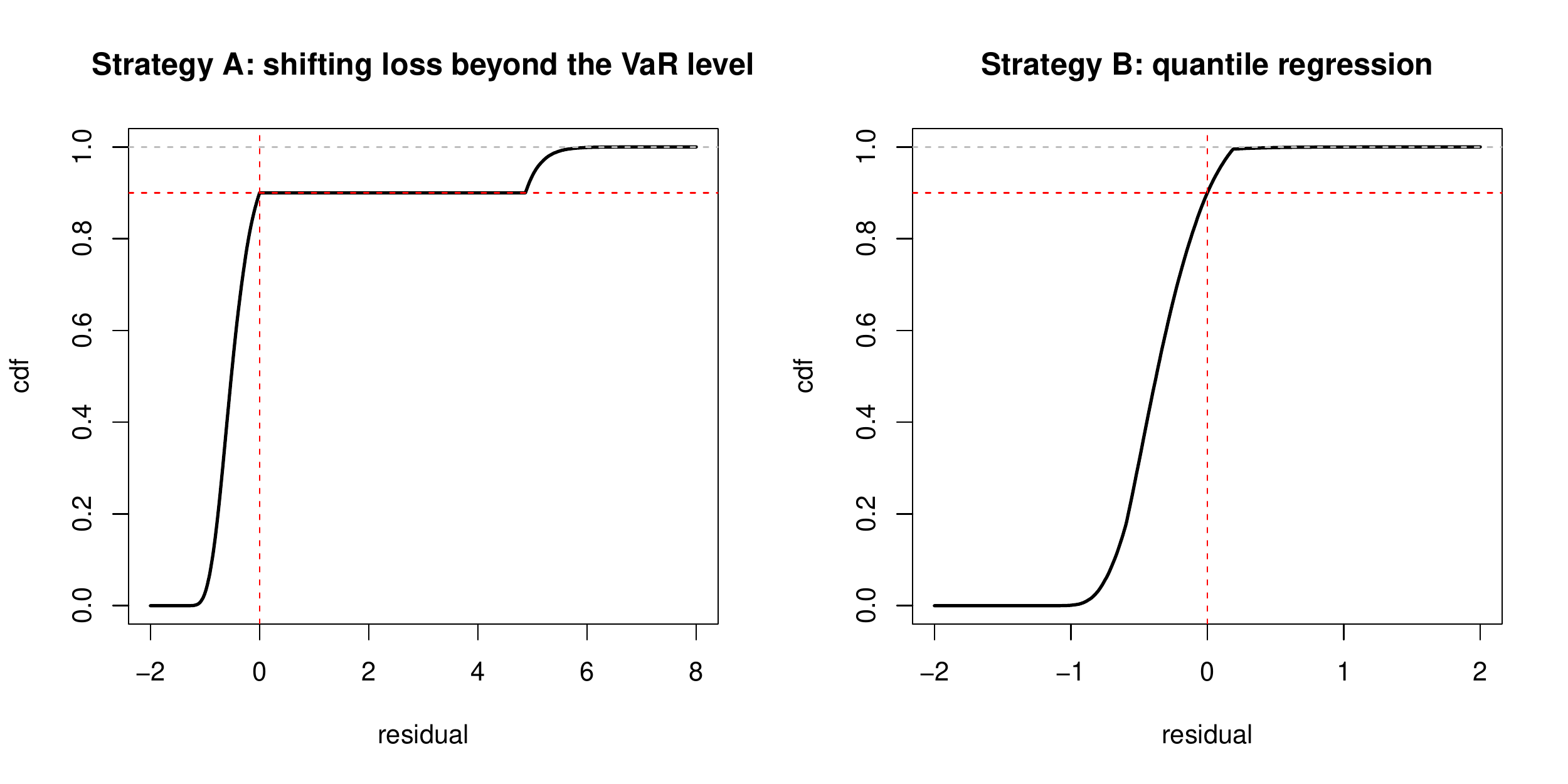} 
    \caption{Cumulative distribution functions of the residual risks $R(S,\bm \beta )$ (left) and $R(S,\bm \xi )$ (right), corresponding, respectively, to the Strategies A and B of Example \ref{example: regulatory arbitrage}.}
    \label{fig:regulatory arbitrage}
\end{figure}

% A possible consideration now is whether the holder of $S$ is free to choose Strategy A or Strategy B. If $Y_1$ is traded in a deep and liquid market, one would reasonably expect the holder of $S$ to be required by the regulator to replicate $S$ as well as possible using $Y_1$, namely to short $Y_1$ in order to hedge its tail risk, leading to the higher replication cost under Strategy B. However, if $Y_1$ is an over-the-counter derivative, then such a replication requirement may not apply and Strategy A may be admissible, as long as exposure to $Y_1$ is seen as part of the liability itself (rather than an asset held against capital requirements). In that case the holder of $S$ may consider their full liability to be $S-\beta_1 Y_1$, with no (other) correlated assets in the market, and resulting `fair' value:
% \begin{align*}
% \rho(S-\beta_1 Y_1)&=\E[S-\beta_1 Y_1]+i\left(\VaR_{0.9}(S-\beta_1 Y_1)-\E[S-\beta_1 Y_1]\right)\\
% &=\E[S-\beta_1 Y_1]+i\left(0-\E[S-\beta_1 Y_1]\right)\\
% &=(1-i)\E[S-\beta_1 Y_1]\\
% &\approx 0.018.
% \end{align*}
% Hence, the total cost under Strategy A would be $\beta_1+\rho(S-\beta_1 Y)\approx 1.1$, which is lower than Strategy B but leads to poor risk management. Nevertheless, if the regulator recommends or imposes the use of quantile hedging, Strategy A would not be allowed and Strategy B would be preferred.
%todo[inline]{I condensed this example as we have not yet reached our 2-step valuation}
\end{example}

\subsection{Two-step valuation with quadratic and quantile hedging} \label{sectiontwostep1period}

\textcolor{black}{We have considered different loss functions  to yield a portfolio that is more suitable for fair valuation in a solvency context, compared to quadratic hedging. Valuation still needs to take into account that only a fraction of  capital requirements is borne by policyholders. Here we show that the hedging strategy based on a general loss function can be obtained also by a two-step  approach, where quadratic hedging is used as a first step and  hedging based on a convex loss function is subsequently applied on the residual liability (see Lemma \ref{hedgingdecomposition} for more details).} \textcolor{black}{The two-step approach is designed based on the idea that assets can be raised from a capital investor, if sufficient return can be provided. Hence the value of the liabilities is not equal to the full cost of the hedging strategy $\bm \xi$, but needs instead to cover shareholders' capital costs for funding this strategy.}
\textcolor{black}{Hence, quantile hedging ensures that regulatory requirements are satisfied, while the decomposition of the hedging costs to a part that is fully borne by policyholders and a `cost-of-capital' part, reflects the source of funding of the strategy. As a result we obtain a generalisation of \eqref{rhococ}.}

\textcolor{black}{Before discussing in depth this approach, we briefly  recall the definition of a valuation and its properties as introduced in \cite{dhaene2017fair}.}
\begin{definition}[Valuation] \label{def:valuation}A valuation is a mapping
\begin{equation*}
\rho~:~\mathcal{C}\to \R,~S\mapsto \rho(S),
\end{equation*}
that is \textit{normalised}
$\rho(0)=0$
and \textit{translation invariant} $$\rho(S+a)=\rho(S)+e^{-r}a,\quad \forall S\in\mathcal{C},~a\in\R.$$

\end{definition}
Here we list some properties that a valuation  may satisfy and which will be discussed in the following. For any $S,S_1,S_2\in\mathcal{C}$ we say that a valuation $\rho$ is
\begin{enumerate}
\item \textit{Positive homogeneous} if 
$$
\rho(\lambda S)=\lambda \rho(S),\qquad \textrm{for any } ~\lambda\geq 0;
$$
\item \textit{Market-Consistent} if 
$$
\rho(S+S^h)=\rho(S)+\bm\beta\cdot \mathbf{y},
$$
 for any \textit{hedgeable} payoff: $S^h=\bm \beta\cdot \mathbf{Y}$;
\item \textit{Actuarial} if \begin{equation*}
\rho (S^\perp)=e^{-r} \mathbb{E}[S^\perp]+\RM(S^\perp), 
\end{equation*}
for any claim $S^\perp \in\mathcal{C}^\perp $,  where $\RM:\mathcal{C}^\perp\to \R$ is a mapping that does not depend on current asset prices $\mathbf{y}$;
\item \textit{Fair} if $\rho$  is market-consistent and actuarial.
\end{enumerate}
 The market-consistency property means that the valuation is marked-to-market for any hedgeable part of a liability, while a valuation is actuarial if it is marked-to-model for any claim which is independent of the financial market. {The definition of \textit{fair valuation} was recently introduced by \cite{dhaene2017fair} to formalize the valuation of hybrid claims as required by  solvency regulation (see also \citealp{barigou2019fair,delong2019fair})}.
 
\begin{definition}[Two-step valuation]\label{deftwostep}
Consider the liability $S$ \textcolor{black}{and a  convex loss function $\ell:\mathbb{R}\to [0,+\infty)$, such that $\ell(x)=0$ if and only if $x=0$}. Let $\bm \theta$ be the quadratic hedging strategy for $S$ and let \textcolor{black}{$\bm \eta^{(\ell)}=(\eta^{(\ell)}_0, ,\ldots,\eta^{(\ell)}_n)$ be the hedging strategy for the residual risk $R(S,\bm\theta)=S-\bm \theta \cdot \mathbf{Y}$ with  loss function $\ell$, that is, the strategy that minimises the expected loss}:
\begin{equation}\bm \eta^{(\ell)}:=\arg \min_{\bm \beta\in\mathcal{B}} \mathbb{E}\left[ \ell \big(R(S,\bm\theta)-\bm \beta \cdot \mathbf{Y}\big)\right].\label{eta}
\end{equation}
 Then, the \emph{two-step valuation}\footnote{We remark that our two-step approach should not be confused with the \textit{two-step market valuation} proposed by \cite{pelsser2014time} and the two-step evaluators of \cite{assa2018market} which have both different meanings.} for $S$ is defined as
\begin{equation}
\rho^{(\ell)}(S):= \bm \theta\cdot \mathbf{y} + i \ \bm \eta^{(\ell)} \cdot \mathbf{y}.\label{twostepdef}
\end{equation}
\end{definition}
\begin{remark}\textcolor{black}{The assumption that $\ell(x)=0$ for $x=0$ only, reflects our concern with using the function $\ell$ to construct a ``closeness'' criterion between investment return and liability. While we are interested in penalising the downside more than the upside, we still consider that there is a cost associated with the case $R(S,\bm \theta)<\bm\beta\cdot\bm Y$ in \eqref{eta}. This excludes functions such as $\ell(x)=\max(x,0)$, which are concerned with the downside only and are used for instance in \cite{follmer2000efficient}}
\end{remark}
\textcolor{black}{Using the translation invariance property of the ordinary least square regression, it is immediate to verify that the two-step valuation is indeed a valuation according to Definition \ref{def:valuation}. The following lemma shows that performing convex hedging on the liability or performing quadratic hedging on the liability and then convex hedging on the residual risk leads to the same hedging strategy. Hence our two-step approach is consistent with constructing an asset portfolio that ``convex-hedges'' the liability $S.$ We recall from (\ref {regr general}) that $\bm\xi^{(\ell)}_S$ is the hedging strategy that minimises the expected loss function for the liability $S$ ($
{\bm \xi^{(\ell)}}=\arg \min_{\bm \beta\in\mathcal{B}} \mathbb{E}\left[\ell(S-\bm \beta \cdot \mathbf{Y})\right]
$), while $\bm \eta^{(\ell)}$ is the hedging strategy for the same expected loss function applied to the residual risk as defined in \eqref{eta}.}

\textcolor{black}{\begin{lemma}\label{hedgingdecomposition}%{This Lemma holds for a general convex loss function and the solution is unique as long as the convex function admits a derivative}
\begin{enumerate}\item[a)]If $\bm \xi^{(\ell)}$ is the unique minimiser of the expected loss for the function $\ell$ and liability $S$ as defined in \eqref{regr general}, and if $\bm \theta$ and $ \bm \eta^{(\ell)}$ are the hedging strategies as defined in \eqref{quadratic} and \eqref{eta}, then we have
\begin{equation*}
 \bm \xi^{(\ell)}=\bm \theta+\bm \eta^{(\ell)}.
\end{equation*}
\item[b)]If the  hedging strategy for $S$ is not unique, denote by $\{\bm \xi^{(\ell),j}\}_{j\in\mathcal{A}} $ the  set of minimisers of \eqref{regr general}. Then, for any $j\in\mathcal{A}$, we have that
\begin{equation*}
\bm \xi^{(\ell),j}=\bm \theta+\bm \eta^{(\ell),j},
\end{equation*}
where $\bm \eta^{(\ell),j}\in\{\bm \eta^{(\ell),j}\}_{j\in\mathcal{A}} $, is the set of  hedging strategies for the residual risk $S-\bm \theta\cdot \mathbf{Y}$.\end{enumerate}
\end{lemma}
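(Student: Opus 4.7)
The key idea is that the objective in the one-step convex regression \eqref{regr general} and the objective in the two-step residual regression \eqref{eta} differ only by a reparametrisation of the strategy. I would introduce the change of variables $\bm \gamma = \bm \beta - \bm \theta$ in \eqref{regr general}; since $\mathcal{B} = \mathbb{R}^{n+1}$ is a linear space, $\bm \beta$ ranges over $\mathcal{B}$ if and only if $\bm \gamma$ does. Using $S - \bm\beta\cdot \mathbf{Y} = (S - \bm\theta\cdot\mathbf{Y}) - \bm\gamma\cdot\mathbf{Y} = R(S,\bm\theta) - \bm\gamma\cdot \mathbf{Y}$, the expected loss transforms as
\begin{equation*}
\mathbb{E}\bigl[\ell(S - \bm\beta\cdot\mathbf{Y})\bigr] \;=\; \mathbb{E}\bigl[\ell(R(S,\bm\theta) - \bm\gamma\cdot\mathbf{Y})\bigr].
\end{equation*}
Thus the two optimisation problems have identical objective values up to the affine shift $\bm\gamma \mapsto \bm\gamma + \bm\theta$ in the argument.

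For part (a), uniqueness of the minimiser of either problem then forces $\bm\xi^{(\ell)} = \bm\theta + \bm\eta^{(\ell)}$, which is the stated decomposition. For part (b), the same transformation yields a bijection between the solution sets: if $\mathcal{A}$ indexes the minimisers $\{\bm\xi^{(\ell),j}\}_{j\in\mathcal{A}}$ of \eqref{regr general}, define $\bm\eta^{(\ell),j} := \bm\xi^{(\ell),j} - \bm\theta$; then $\{\bm\eta^{(\ell),j}\}_{j\in\mathcal{A}}$ is exactly the set of minimisers of \eqref{eta}, and conversely every solution of \eqref{eta} arises this way. The decomposition $\bm\xi^{(\ell),j} = \bm\theta + \bm\eta^{(\ell),j}$ is then immediate.

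Honestly, there is no real obstacle here: the proof is a one-line substitution exploiting that $\mathcal{B}$ is closed under translation by $\bm\theta$. The only point worth being careful about is formal: since $\bm\theta$ is well defined (the non-redundancy assumption on $\mathbf{Y}$ guarantees the quadratic minimiser is unique), the affine shift $\bm\beta \mapsto \bm\beta - \bm\theta$ is a bijection of $\mathcal{B}$ onto itself, which is what makes the reparametrisation legitimate. The lemma therefore does not need any assumption on $\ell$ beyond what is already required to make \eqref{regr general} and \eqref{eta} well posed; in particular, convexity of $\ell$ is not used in the argument itself, only in the ambient setup that guarantees existence of minimisers.
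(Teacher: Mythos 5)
Your proof is correct. The key identity you use, $S-\bm\beta\cdot\mathbf{Y}=R(S,\bm\theta)-(\bm\beta-\bm\theta)\cdot\mathbf{Y}$, is exactly what drives the paper's proof as well, but you package it differently: you observe that the translation $\bm\beta\mapsto\bm\beta-\bm\theta$ is a bijection of $\mathcal{B}=\mathbb{R}^{n+1}$ onto itself carrying the objective of \eqref{regr general} onto that of \eqref{eta}, so the two solution sets correspond elementwise, and both parts a) and b) drop out at once. The paper instead proves part a) by contradiction (supposing $\bm\xi^{(\ell)}\neq\bm\theta+\bm\eta^{(\ell)}$ and deriving a strict improvement over $\bm\eta^{(\ell)}$ in the residual problem, using uniqueness of $\bm\xi^{(\ell)}$ to get the strict inequality) and then handles part b) by a separate, somewhat informal optimality-transfer argument. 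Your direct bijection argument is cleaner, treats the unique and non-unique cases uniformly, and makes explicit two points the paper leaves implicit: that the only structural ingredient is closure of $\mathcal{B}$ under translation by $\bm\theta$ (which is well defined by non-redundancy of the assets), and that convexity of $\ell$ plays no role in the decomposition itself, only in the ambient well-posedness of the two minimisation problems.
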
}
\textcolor{black}{\begin{proof}
	\begin{enumerate}\item[a)]Let us first consider the case where the hedging strategy is unique. The proof is by contradiction. Assume that $ \bm \xi^{(\ell)}\neq \bm \theta+\bm \eta^{(\ell)}$ and define $\bm\eta^{{(\ell)},*}=\bm\xi^{(\ell)}-\bm\theta$. By \eqref{regr general} we have 
	$$
	 \mathbb{E}\left[ \ell(S-\bm \xi^{(\ell)}_S \cdot \mathbf{Y})\right]=\mathbb{E}\left[\ell(S-(\bm \theta+\bm \eta^{(\ell),*} ) \cdot \mathbf{Y}\right)]<  \mathbb{E}\left[ \ell(S-(\bm \theta+\bm \eta^{(\ell)} ) \cdot \mathbf{Y})\right],
	$$
	which contradicts the definition of $\bm \eta^{(\ell)}$:
	$$
	\bm \eta^{(\ell)}=\arg \min_{\bm \beta\in\mathcal{B}} \mathbb{E}\left[ \ell(S-\bm \theta \cdot \mathbf{Y}-\bm \beta \cdot \mathbf{Y})\right].
	$$
\item[b)]For the non-uniqueness case, take $\bm \xi^{(\ell),j}\in\{\bm \xi^{(\ell),j}_S\}_{j\in\mathcal{A}}$ and define $\bm\eta^{(\ell),j}=\bm\xi^{(\ell),j}-\bm\theta $. By definition, we have that
$$ 
 \mathbb{E}\left[ \ell(S-\bm \xi^{(\ell),j} \cdot \mathbf{Y})\right]= \mathbb{E}\left[ \ell(S-\bm \theta \cdot \mathbf{Y}-\bm \eta^{(\ell),j }\cdot \mathbf{Y})\right].
$$
Therefore, $\bm \eta^{(\ell),j}$ should be a convex hedging strategy for the residual risk, otherwise this would contradict the optimality of $\bm \xi^{(\ell),j}_S $. Analogously, if $\bm\eta^{(\ell),j}$ is a convex hedging strategy for the residual risk then $\bm \xi^{(\ell),j}_S=\bm\theta+\bm\eta^{(\ell),j}$ must be a convex hedging strategy for $S$.
\end{enumerate}
\end{proof}}
\begin{remark}\textcolor{black}{Note that, if the loss function $\ell$ is strictly convex, then the minimiser in \eqref{eta} is unique and therefore only part $a)$ of Lemma \ref{hedgingdecomposition} applies.}
\end{remark}

\textcolor{black}{The two-step valuation  equals the cost of the quadratic hedging strategy $\boldsymbol{\theta}$ plus the cost of capital of the hedging strategy $\bm \eta^{(\ell)}$ necessary to cover the residual risk. From an economic standpoint, the insurer uses the hedging strategy $\bm\xi^{(\ell)}$ to cover the regulatory requirements and shares its cost between the policyholders and shareholders. From an
asset-liability perspective, the interpretation is as follows: let $\bm \xi^{(\ell)}_S\cdot \bm y= (\bm\theta+\bm\eta^{(\ell)})\cdot\bm y$ be the cost
of the convex hedging strategy at time 0 where $\bm\theta\cdot \bm y$ is the cost of the quadratic hedging at
time 0. The two-step insurance valuation assumes that the quadratic hedging cost $\bm\theta\cdot \bm y$ is
borne by policyholders and the residual cost $\bm\eta^{\ell} \cdot\bm y$ to achieve a VaR-neutral portfolio is borne by shareholders. {Shareholders require a return for the  capital they provide (the so called cost-of-capital) $i\cdot\bm\eta^{(\ell)}\cdot \bm y$ which is also paid by the policyholders}. Thus, what we define as the fair value (the quadratic hedging cost plus the cost-of-capital risk margin) is $\rho(S)=\bm \theta \cdot \bm y+i\cdot\bm\eta^{\ell}\cdot \bm y$. } \textcolor{black}{Note that we assume that assets are invested according to the hedging strategy $\xi^{(\ell)}_S$, which covers the full liability. The split into the strategies $\bm \theta $ and $\eta^{\ell} $ is notional, with the specific purpose of apportioning the cost of hedging to different stakeholders.} \textcolor{black}{Our valuation is thus closely related to the two-step approach of \cite{mohr2011market} who also considers the apportionment of hedging costs to policyholders and shareholders in a very similar way. \cite{mohr2011market} employs a different acceptability condition, interpreting the
cost-of-capital rate as the expected excess return required by the shareholders, while, in our case, we implicitly assume that shareholders bear the residual risk of deviation from the mean-quantile hedging strategy.}

The remainder of the paper focuses on the two-step valuation where the quantile hedging strategy is considered in the second step, called the \textit{mean-quantile valuation}. From now on, we drop the upper-script $\ell$ if we consider quantile hedging in the second step. We also briefly discuss the two-step valuation with the expectile loss function \eqref{expectileloss} in the second step, which we call the \textit{mean-expectile valuation}.

In the cost-of-capital approach for \eqref{rhococ}, a capital $c=\VaR_\alpha (R(S,\bm \theta))$ is set up and kept risk-free until year 1 to guarantee that
$$
\VaR_\alpha (S-\bm \theta \cdot \mathbf{Y}-c)=0.
$$
In the two-step valuation proposed in this section, we set up a strategy $\bm \eta$ such that 
$$\VaR_\alpha (S-\bm \theta \cdot \mathbf{Y}-\bm \eta \cdot \mathbf{Y})=\VaR_\alpha\Big(R\big(R(S,\bm \theta),\bm \eta\big)\Big)=0. $$
% Therefore, hedging opportunities and risk minimisation occur for the residual loss. In the remainder of the paper, we will show that the two-step valuation provides a better risk management compared to the standard cost-of-capital approach.

%\begin{remark} Lemma \ref{hedgingdecomposition} holds equally if we consider a two-step valuation where the second step is obtained using  expectile hedging instead of quantile hedging, we find again
%$$
%\bm\xi^{(\ell_\tau)}=\bm\theta+\bm\eta^{(\ell_\tau)}.
%$$
%%The only difference is that the expectiles hedging strategy is always unique and therefore part $b)$ of Lemma \ref{hedgingdecomposition} is not required.
%\end{remark}

In the following result, we show that the mean-quantile valuation is positive homogeneous and fair. \begin{theorem}\label{theoremfair}
The mean-quantile valuation is positive homogeneous and  fair.
\end{theorem}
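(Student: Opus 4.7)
\textit{Proof plan.} The plan is to verify the three defining properties separately: positive homogeneity, market-consistency, and the actuarial property (``fair'' being shorthand for the conjunction of the last two). In each case, the approach is to trace how the two ingredients of $\rho^{(\ell_\alpha)}$, namely the quadratic strategy $\bm\theta$ and the residual quantile strategy $\bm\eta$, transform under the operation applied to $S$, and then recombine them through \eqref{twostepdef}.

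For positive homogeneity, fix $\lambda\geq 0$. Using the closed form $\bm\theta = (\E[\bm Y^\intercal \bm Y])^{-1}\E[S \bm Y^\intercal]$, which is linear in $S$, the first-step strategy for $\lambda S$ is $\lambda\bm\theta$ and the residual equals $\lambda R(S,\bm\theta)$. Since the Koenker--Bassett loss satisfies $\ell_\alpha(\lambda x)=\lambda\ell_\alpha(x)$ for $\lambda\geq 0$, the change of variable $\bm\beta\mapsto \lambda\bm\beta$ in \eqref{eta} identifies the quantile hedger of $\lambda R(S,\bm\theta)$ as $\lambda\bm\eta$, and \eqref{twostepdef} then yields $\rho^{(\ell_\alpha)}(\lambda S)=\lambda\rho^{(\ell_\alpha)}(S)$.

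For market-consistency, let $S^h=\bm\alpha\cdot\bm Y$ be hedgeable. Linearity of the $L^2$-projection onto $\mathrm{span}(Y_0,\dots,Y_n)$ gives that the first-step strategy for $S+S^h$ is $\bm\theta+\bm\alpha$, so the residual $R(S+S^h,\bm\theta+\bm\alpha)=R(S,\bm\theta)$ is unchanged. The second-step strategy is therefore still $\bm\eta$, and \eqref{twostepdef} delivers $\rho^{(\ell_\alpha)}(S+S^h)=(\bm\theta+\bm\alpha)\cdot\bm y + i\bm\eta\cdot\bm y = \rho^{(\ell_\alpha)}(S)+\bm\alpha\cdot\bm y$.

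The actuarial property is the main obstacle, since it requires both hedging steps to collapse onto the risk-free asset when $S^\perp\in\mathcal{C}^\perp$. For the first step, independence gives $\E[S^\perp \bm Y^\intercal]=\E[S^\perp]\,\E[\bm Y^\intercal]$; using $Y_0=e^r$, a direct computation shows $(\E[\bm Y^\intercal \bm Y])^{-1}\E[\bm Y^\intercal]=(e^{-r},0,\dots,0)^\intercal$, so $\bm\theta=(e^{-r}\E[S^\perp],0,\dots,0)$ and $R(S^\perp,\bm\theta)=S^\perp-\E[S^\perp]$, still independent of $\bm Y$. For the second step, conditioning on $\bm Y$ and using independence,
\[
\E\!\left[\ell_\alpha\!\left(S^\perp-\E[S^\perp]-\bm\beta\cdot\bm Y\right)\right] = \E_{\bm Y}\!\left[g(\bm\beta\cdot\bm Y)\right],\qquad g(c):=\E\!\left[\ell_\alpha(S^\perp-\E[S^\perp]-c)\right],
\]
where $g$ is convex and attains its minimum at $\VaR_\alpha(S^\perp-\E[S^\perp])$. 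The outer expectation is therefore minimised when $\bm\beta\cdot\bm Y$ equals this constant; since $\sum_{i\geq 1}\beta_i Y_i$ being a.s.\ constant contradicts the non-redundancy of $\bm Y$ unless $\beta_i=0$ for $i\geq 1$, this forces $\eta_i=0$ for $i\geq 1$ and $\eta_0=e^{-r}\VaR_\alpha(S^\perp-\E[S^\perp])$. Plugging back into \eqref{twostepdef} gives $\rho^{(\ell_\alpha)}(S^\perp)=e^{-r}\E[S^\perp]+i\,e^{-r}\VaR_\alpha(S^\perp-\E[S^\perp])$, which is of the required form with $\RM(S^\perp)=i\,e^{-r}\VaR_\alpha(S^\perp-\E[S^\perp])$ clearly independent of the price vector $\bm y$. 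A subtle point I would need to address is non-uniqueness of the $\alpha$-quantile: when $g$ is minimised on an interval rather than at a point, I would invoke part (b) of Lemma~\ref{hedgingdecomposition} to justify selecting the constant minimiser as one admissible $\bm\eta$.
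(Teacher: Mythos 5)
Your proof is correct and follows essentially the same route as the paper's: positive homogeneity of the two component strategies, additivity of the quadratic projection for market-consistency, and collapse of both hedging steps onto the risk-free asset for independent claims. The only differences are that you supply a self-contained conditioning/non-redundancy argument where the paper simply cites Theorem 4 of Dhaene et al.\ (2017) for the second-step strategy of an independent claim, and that you carry the $e^{-r}$ factors in $\bm\theta_{S^\perp}$ and $\bm\eta_{S^\perp}$ explicitly (the paper's displayed strategies omit them, though its final valuation formula agrees with yours).
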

\begin{proof}The positive homogeneity is directly obtained by the positive homogeneity of $\bm \theta$ and $\bm \eta$  \citep[see][]{Koenker2005}.
To prove that the mean-quantile valuation is fair, we show that the valuation is market-consistent and actuarial.
\begin{itemize}
\item% The market-consistency property is defined as follows: 
%\begin{equation*}
%	\rho (S+S^h)=\rho(S)+\bm \nu\cdot \mathbf{y},
%\end{equation*}
%for any liability $S$ and any hedgeable liability $S^h$, such that $S^h=\bm \nu\cdot \mathbf{Y}$ for some $\nu\in\mathcal{B}$.\newline
First, we notice that the solution of the quadratic hedging problem is additive:
\begin{equation*}
\bm \theta_{S+S^h}= \bm \theta_{S}+\bm \theta_{S^h}.
\end{equation*}
Since $S^h$ can be hedged with $\bm \nu$, we have that $\bm \theta_{S^h}=\bm \nu$. Therefore, we find that
\begin{align*}
\rho (S+S^h)&=\bm \nu\cdot \mathbf{y}+ \bm \theta_{S} \cdot \mathbf{y}+ i \ \bm \eta_{S+S^h} \cdot \mathbf{y}\\
&=\bm \nu\cdot \mathbf{y}+\rho (S)
\end{align*}
where $\eta_{S+S^h}$ is the quantile hedging strategy of the residual loss of $S+S^h$:
\begin{equation*}
R(S+S^h, \bm \theta_{S+S^h})=S+S^h-\bm \theta_{S^h} \cdot \mathbf{Y}-\bm \theta_{S} \cdot \mathbf{Y}=S-\bm \theta_{S} \cdot \mathbf{Y}=R(S,\bm \theta_S),
\end{equation*}
which ends the proof for the market-consistency.

\item 
%The actuarial property is defined as follows: 
%\begin{equation*}
%\rho (S^\perp)=e^{-r} \mathbb{E}[S^\perp]+RM(S^\perp),
%\end{equation*}
%where $S^\perp$ is a liability which is independent of $\mathbf{Y}$, and $RM$ is a mapping that does not depend on current asset prices $\mathbf{y}$. 
 By standard least-squares arguments, the quadratic hedging strategy of $S^\perp$ is
\begin{equation*}
	\bm \theta_{S^\perp}=(\mathbb{E}[S^\perp],0,\dots,0).
\end{equation*}
Otherwise stated, if a liability is independent of risky assets, the hedging strategy only invests risk-free. Therefore, we find that 
\begin{equation*}
\rho (S^\perp)=e^{-r} \mathbb{E}[S^\perp]+ i \ \bm \eta_{S^\perp} \cdot \mathbf{y},
\end{equation*}
where $ \eta_{S^\perp} $ is the quantile hedging strategy for $R(S^\perp, \bm \theta_{S^\perp})=S^\perp-\mathbb{E}[S^\perp]$. Since $R(S^\perp, \bm \theta_{S^\perp})$ is independent of the risky assets, we find that the quantile hedging strategy is (cf. Theorem 4 in \citealp{dhaene2017fair})
\begin{equation*}
\bm \eta_{S^\perp}=(\VaR_\alpha(S^\perp-\mathbb{E}[S^\perp]),0,\dots,0),
\end{equation*}
which implies that the two-step valuation of $S^\perp$ is
\begin{equation*}
\rho (S^\perp)=e^{-r} \mathbb{E}[S^\perp]+e^{-r} i \VaR_\alpha(S^\perp-\mathbb{E}[S^\perp]).
\end{equation*}
The mean-quantile valuation corresponds then to the standard cost-of-capital principle and the mean-quantile valuation is fair.
\end{itemize}
\end{proof}
\begin{theorem}\label{theoremfairexpectiles}
The mean-expectile valuation is positive homogeneous %, subadditive \todo[inline]{to be added} (for $\tau\geq 0.5$) 
 and fair.
\end{theorem}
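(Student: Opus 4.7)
The plan is to adapt the proof of Theorem \ref{theoremfair} step by step, replacing the Koenker--Bassett loss $\ell_\alpha$ and $\VaR_\alpha$ throughout by the expectile loss $\ell_\tau$ and the $\tau$-expectile $e_\tau$. Positive homogeneity and market-consistency will transfer with essentially no modification; the one place that requires a genuinely new verification is showing that, when the residual liability is independent of the traded assets, the expectile hedging strategy invests solely in the risk-free asset.

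For positive homogeneity, I would use that the expectile loss is $2$-homogeneous, $\ell_\tau(\lambda x) = \lambda^2 \ell_\tau(x)$ for $\lambda \geq 0$, combined with the already known positive homogeneity of the quadratic strategy $\bm \theta$. A change of variables $\bm \beta \mapsto \lambda \bm \beta$ inside the minimisation in \eqref{eta} yields $\bm \eta^{(\ell_\tau)}_{\lambda S} = \lambda\, \bm \eta^{(\ell_\tau)}_S$, and the claim $\rho^{(\ell_\tau)}(\lambda S) = \lambda\, \rho^{(\ell_\tau)}(S)$ follows directly from \eqref{twostepdef}.

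Market-consistency carries over verbatim from Theorem \ref{theoremfair}, because the argument there uses only the additivity of $\bm \theta$ under hedgeable payoffs together with the identity $R(S+S^h, \bm \theta_{S+S^h}) = R(S, \bm \theta_S)$; the particular loss function used in the second step is irrelevant to that step. For any $S^h = \bm \nu \cdot \mathbf{Y}$, this immediately implies $\bm \eta^{(\ell_\tau)}_{S+S^h} = \bm \eta^{(\ell_\tau)}_S$ and hence $\rho^{(\ell_\tau)}(S + S^h) = \rho^{(\ell_\tau)}(S) + \bm \nu \cdot \mathbf{y}$.

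The actuarial property is the step I expect to require the most care. For $S^\perp$ independent of $\mathbf{Y}$, standard least-squares (together with non-redundancy) gives $\bm \theta_{S^\perp} = (e^{-r}\mathbb{E}[S^\perp], 0, \ldots, 0)$, so the residual $S^\perp - \mathbb{E}[S^\perp]$ remains independent of $\mathbf{Y}$. One then needs to show that the expectile hedging strategy of an independent residual invests only in the risk-free asset; this follows from Theorem 4 of \cite{dhaene2017fair}, which is stated for general convex hedgers and directly covers $\ell_\tau$. Alternatively, it can be verified by a short first-order argument: the conditions $\mathbb{E}[Y_i\, \ell_\tau'(S^\perp - \mathbb{E}[S^\perp] - c)] = 0$ factor by independence as $\mathbb{E}[Y_i] \cdot \mathbb{E}[\ell_\tau'(S^\perp - \mathbb{E}[S^\perp] - c)]$, and the second factor vanishes precisely at $c = e_\tau(S^\perp - \mathbb{E}[S^\perp])$ by the defining property of the expectile, with strict convexity of $\ell_\tau$ ensuring uniqueness. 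Substituting into \eqref{twostepdef} yields $\rho^{(\ell_\tau)}(S^\perp) = e^{-r}\mathbb{E}[S^\perp] + e^{-r} i\, e_\tau(S^\perp - \mathbb{E}[S^\perp])$, so the risk margin depends only on the distribution of $S^\perp$, confirming the actuarial property and completing the proof.
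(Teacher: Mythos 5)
Your proposal is correct and follows essentially the same route as the paper: the paper likewise establishes positive homogeneity via the $2$-homogeneity identity $\E[\ell_\tau(aS-a\bm\beta\cdot\bm Y)]=a^2\E[\ell_\tau(S-\bm\beta\cdot\bm Y)]$ and then simply states that fairness ``follows exactly the same steps'' as the mean-quantile case. Your explicit verification of the actuarial property (via Theorem 4 of \cite{dhaene2017fair} or the first-order condition factoring under independence) just fills in the details the paper omits.
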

\begin{proof}To verify the positive homogeneity of the valuation it is sufficient to check that  the expectile strategy is positive homogeneous. Note that for any $a>0,~S\in\mathcal{C}, \bm\beta\in\mathcal{B}$ we have:
$$
\E[\ell_\tau(aS-a\bm\beta\cdot\bm Y)]=a^2\E[\ell_\tau(S-\bm\beta\cdot\bm Y)],
$$
that implies $\arg\min_{\bm\beta\in\mathcal{B}}\E[\ell_\tau(aS-\bm\beta\cdot\bm Y)]=a\arg\min_{\bm \beta\in\mathcal{B}}\E[\ell_\tau(S-\bm\beta\cdot\bm Y)]$. The fairness of the mean-expectile valuation follows  exactly the same steps of the one for the mean-quantile valuation and is therefore omitted.
\end{proof}

Hereafter, we show that applying quantile hedging to the residual risk will reduce the tail of the residual risk compared to making an investment in the risk-free asset. \newline
Let us assume that we want to hedge $R(S,\bm \theta)=S-\bm \theta \cdot \mathbf{Y}$ and the regulator imposes that $\VaR_\alpha(R(S,\bm \theta)-\bm \beta \cdot \mathbf{Y}) =0$ for some trading strategy $\bm\beta\in\mathcal{B}$. To achieve this, there are two possibilities:
\begin{itemize}
\item Consider an investment in the risk-free asset equal to $\VaR_\alpha(R(S,\bm \theta )) $. We denote this strategy by $\bm \nu $.
\item Consider the quantile hedging strategy $\bm \eta $ such that $\VaR_\alpha(R(S,\bm \theta )-\bm  \eta \cdot \mathbf{Y})=0$.
\end{itemize}
The quantile hedging strategy is the minimiser of the Tail Value-at-Risk (TVaR) deviation of the residual risk among all the strategies which satisfy the VaR regulatory constraint. We recall that TVaR is a coherent risk measure defined as 
$$
\textrm{TVaR}_{\alpha}(X)=\frac{1}{1-\alpha}\int_\alpha^1\VaR_u(X)\textrm{d}u,\quad\textrm{for any finite mean risk }X~\textrm{and any }\alpha\in(0,1)
$$
and the  TVaR deviation (dTVaR) is defined 
as $\dTVaR_\alpha(X)=\TVaR_\alpha(X)-\E(X)$ for any $X\in\mathcal{C}$.
\begin{lemma}\label{lemmatvar}
Consider $\bm \theta $, $\bm \nu $ and $\bm \eta $ as defined above. The quantile hedging strategy is the minimiser of the TVaR deviation of the residual risk: 
\begin{equation*}
\dTVaR_\alpha(R(S-\bm  \theta\cdot \mathbf{Y},\bm  \eta)) \leq \dTVaR_\alpha(R(S-\bm  \theta\cdot \mathbf{Y},\bm  \beta)),
\end{equation*}
for all hedging strategies $\bm  \beta $ such that $\VaR_\alpha(R(S,\bm \theta )-\bm \beta \cdot \mathbf{Y}) =0$. This is in particular the case for $\bm  \beta= \bm \nu$.
\end{lemma}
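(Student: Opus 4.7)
The plan is to exploit the well-known variational link between the Koenker-Bassett loss and Tail Value-at-Risk (Rockafellar-Uryasev), together with the fact that $\bm\eta$ is \emph{unconstrained}-optimal for the expected Koenker-Bassett loss yet happens to lie in the feasible set defined by the VaR constraint.

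First, I would fix a feasible strategy $\bm\beta\in\mathcal{B}$ with $\VaR_\alpha(R(S,\bm\theta)-\bm\beta\cdot\mathbf{Y})=0$ and denote $X_{\bm\beta}=R(S,\bm\theta)-\bm\beta\cdot\mathbf{Y}$. By the Rockafellar-Uryasev representation,
\begin{equation*}
\TVaR_\alpha(X_{\bm\beta})=\VaR_\alpha(X_{\bm\beta})+\frac{1}{1-\alpha}\mathbb{E}\left[(X_{\bm\beta}-\VaR_\alpha(X_{\bm\beta}))_+\right]=\frac{1}{1-\alpha}\mathbb{E}[(X_{\bm\beta})_+].
\end{equation*}
Using $\mathbb{E}[X_{\bm\beta}]=\mathbb{E}[(X_{\bm\beta})_+]-\mathbb{E}[(X_{\bm\beta})_-]$, a short calculation gives
\begin{equation*}
\dTVaR_\alpha(X_{\bm\beta})=\TVaR_\alpha(X_{\bm\beta})-\mathbb{E}[X_{\bm\beta}]=\frac{\alpha}{1-\alpha}\mathbb{E}[(X_{\bm\beta})_+]+\mathbb{E}[(X_{\bm\beta})_-]=\mathbb{E}[\ell_\alpha(X_{\bm\beta})].
\end{equation*}
Hence, on the feasible set, the TVaR deviation coincides exactly with the expected Koenker-Bassett loss.

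Second, I would invoke the defining property of $\bm\eta$ as the minimiser of $\mathbb{E}[\ell_\alpha(R(S,\bm\theta)-\bm\beta\cdot\mathbf{Y})]$ over all $\bm\beta\in\mathcal{B}$, together with the VaR-neutrality \eqref{VaRneutrality} applied to $R(S,\bm\theta)$, which ensures that $\bm\eta$ itself satisfies the constraint $\VaR_\alpha(R(S,\bm\theta)-\bm\eta\cdot\mathbf{Y})=0$. Combining the two identities above, for every feasible $\bm\beta$,
\begin{equation*}
\dTVaR_\alpha(X_{\bm\eta})=\mathbb{E}[\ell_\alpha(X_{\bm\eta})]\leq \mathbb{E}[\ell_\alpha(X_{\bm\beta})]=\dTVaR_\alpha(X_{\bm\beta}),
\end{equation*}
which is the claim. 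The case $\bm\beta=\bm\nu$ follows because $R(S,\bm\theta)-\bm\nu\cdot\mathbf{Y}=R(S,\bm\theta)-\VaR_\alpha(R(S,\bm\theta))$ has VaR equal to zero by translation invariance.

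The only mildly delicate step is the algebraic reduction of $\dTVaR_\alpha$ to $\mathbb{E}[\ell_\alpha(\cdot)]$ under the VaR-neutrality constraint; everything else is a direct appeal to the definition of $\bm\eta$ and to \eqref{VaRneutrality}. No measurability or integrability issues arise beyond $X_{\bm\beta}\in L^1$, which is inherited from $\mathcal{C}\subseteq L^2$.
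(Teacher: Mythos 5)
Your proof is correct and follows essentially the same route as the paper, which simply states that the lemma ``is a direct application of Theorem 3.2 in Rockafellar et al.\ (2008)''; what you have done is spell out the content of that citation, namely the identity $\dTVaR_\alpha(X)=\E[\ell_\alpha(X)]$ for any $X$ with $\VaR_\alpha(X)=0$, combined with the facts that $\bm\eta$ is the \emph{unconstrained} minimiser of the expected Koenker--Bassett loss and is itself VaR-neutral. The algebra reducing $\dTVaR_\alpha$ to the expected normalised Koenker--Bassett loss on the feasible set checks out, and the feasibility of $\bm\nu$ follows from translation invariance of VaR exactly as you say, so your write-up is a valid self-contained substitute for the paper's one-line appeal to the reference.
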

\begin{proof}
This is a direct application of Theorem 3.2. in \cite{rockafellar2008risk}.
\end{proof}

\textcolor{black}{Lemma \ref{lemmatvar} also reveals why quantile hedging, through the minimisation of a convex loss function, does avoid perverse incentives, as shown in Example \ref{example: regulatory arbitrage}. In the next example, we illustrate the application of our two-step valuation approach in an insurance portfolio.}

\begin{example}\label{exampleELoneperiod}
Here we show how our proposed valuation methodology works in a single-period setting, where the liability $S$ is highly -- but  non-linearly -- correlated with a tradeable asset $Y_{1}$. A more realistic dynamic version of this model is discussed in detail in Section \ref{numericalexamples}.

We consider a portfolio of equity-linked life insurance contracts, which guarantee a survival benefit to all policyholders who are still alive at maturity time $T=1$. The insurance liability can be expressed as
\begin{equation}
S=N\times\max\left(  Y_1,K\right), \label{payoff_single_period}
\end{equation}
where $N$ represents the number of survivors at time $1$,  
among an initial population of $1000$ policyholders and $K$ is a guarantee level. We make the following assumptions: $N\sim Bin(1000,0.9)$; $Y_1 \sim LN(0.1,0.2^2)$; the value of the risky asset at time 0 is $y_1=1$; interest rates are zero, that is, $y_0=Y_0=1$ and $K=1$. The analysis is carried out on a sample of 200,000 simulated scenarios.

On the left of Figure \ref{fig:Example2_a}, we plot samples of $S$ against $Y_1$. One can see the strong (nearly deterministic) positive relationship between the two, indicating that $Y_1$ can be used to hedge $S$. On the right of Figure \ref{fig:Example2_a}, we show the residuals of the quadratic hedging strategy, $R(S,\bm\theta)$ against $Y_1$. The residuals are clearly not independent of the tradeable asset, indicating that quantile or expectile hedging of those residuals will be meaningful. In other words, we would expect that our two-step valuation $\rho$ would give different answers than the valuation $\phi$ defined in \eqref{rhococ}.

\begin{figure}
	\centering
\begin{knitrout}
\definecolor{shadecolor}{rgb}{0.969, 0.969, 0.969}\color{fgcolor}
\includegraphics[width=1\linewidth]{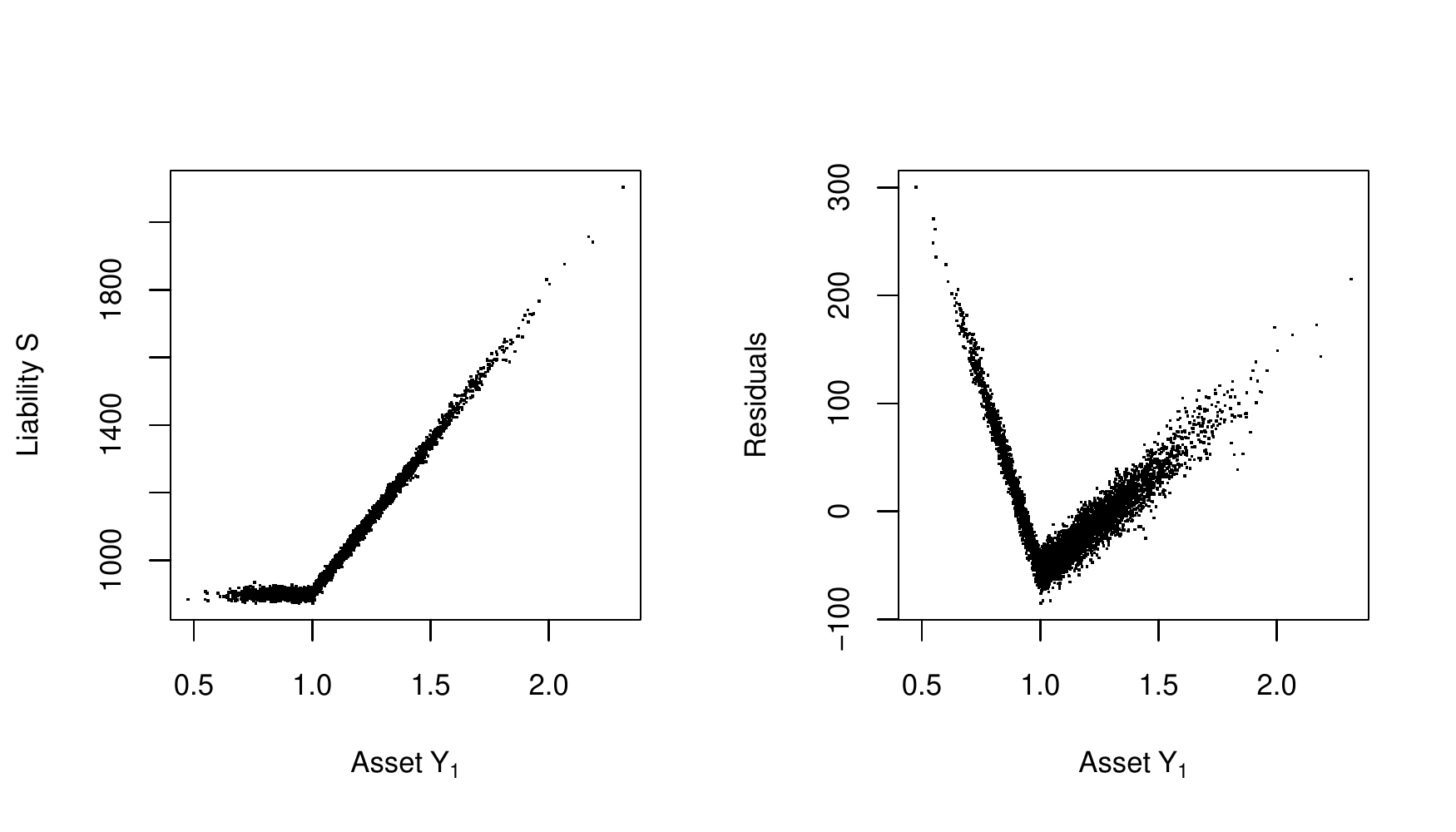} 

\end{knitrout}
	\caption{Liability $S$ (left) and residual $R(S,\bm\theta)$ (right) against value of the risky asset $Y_1$.}
	\label{fig:Example2_a}
\end{figure}

In Table \ref{tab:Example2_a} we present the following trading strategies:
\begin{enumerate}
 \item $\bm \theta$: the result of quadratic hedging of $S$.
 \item $\bm \xi$: the result of quantile hedging of $S$, with $\alpha=0.99$.
 \item $\bm \xi^{(\ell_\tau)}$: the result of expectile hedging of $S$, with $\tau=0.998$. The value of $\tau$ was calibrated so that the Value-at-Risk and expectiles of $S$ match, i.e. $\VaR_\alpha(S)= \Gamma^{(\ell_\tau)}(S)$.
 \item $(\VaR_\alpha(R(S,\bm \theta)),0)$: investing the VaR of the residuals $R(S,\bm \theta)$ in the risk-free asset.
 \item $\bm \eta$: the result of quantile hedging of $R(S,\bm \theta)$.
 \item $\bm \eta^{(\ell_\tau)}$: the result of expectile hedging of $R(S,\bm \theta)$.
\end{enumerate}
We can observe that $\bm \xi$ and $\bm \xi^{(\ell_\tau)}$ place a substantially higher investment into the risk-free asset, compared to $\bm \theta$, reflecting the more stringent criterion encoded in the respective loss functions -- recall that $\E\big[R(\bm \theta,S)\big]=0$, $\VaR_\alpha\big(R(\bm \xi,S)\big)=0$, $\Gamma^{(\ell_\tau)}\big(R(\bm \xi^{(\ell_\tau)},S)\big)=0$. At the same time, the investment in the risky asset is somewhat higher for $\bm \theta$, reflecting a lower sensitivity to adverse movements in asset values. The remaining three strategies pertain to hedging the residuals of the first (quadratic hedging) step. We see that this second step, for quantile and expectile hedging, involves a reduction in the exposure to the risky asset. \textcolor{black}{From Table \ref{tab:Example2_a}, we also observe that the quantile hedging strategy can lead to a reduction of the total assets required at time 0 to achieve a VaR-neutral portfolio. Indeed, the cost of quantile strategy $\boldsymbol{\xi}$ is lower than the cost of the quadratic strategy and investing the residual risk risk-free.}
\begin{table}[h]
  
    \centering  
        \caption{Investment in risk-free and risky asset, from hedging strategies associated with the two-step valuation of $S$.}\label{tab:Example2_a}
    \begin{tabular}{cccc}
    \hline
         Strategy & risk-free asset  & risky asset & cost of strategy \\\hline
          $\bm \theta$  & 247 & 709 & 956 \\
          $\bm \xi$ &   460 & 658  &   1118 \\
          $\bm \xi^{(\ell_\tau)}$ &  450 & 663 &  1113 \\\hline
          $(\VaR_\alpha(R(S,\bm \theta)),0)$ & 163 & 0 & 163\\
          $\bm \eta$ & 213 & -52& 161\\
          $\bm \eta^{(\ell_\tau)}$ &204 &  -47 &157 \\\hline
    \end{tabular}
    
\end{table}

In Figure \ref{fig:Example2_b}, we show the densities of the residuals corresponding to the strategies of 4.--6. above. We can see that quantile and expectile hedging lead to a somewhat different shape, compared to a quadratic regression that is followed by investing the $\VaR_\alpha$ of the residual risk in the risk-free asset. We quantify the difference between the three densities, by stating the corresponding TVaR deviations:

\begin{align*}
 \dTVaR_\alpha\big(R(S,\bm \theta)-\VaR_\alpha(R(S,\bm \theta))\big)&= 194.2\\
 \dTVaR_\alpha\big(R(R(S,\bm \theta),\bm \eta)\big)&= 182.5\\
 \dTVaR_\alpha\big(R(R(S,\bm \theta),\bm \eta^{(\ell_\tau)})\big)&= 182.6
\end{align*}
Hence, the application of quantile and expectile hedging reduces the tail of the residuals, compared to the quadratic hedging case.  For quantile regression, this observation is a direct implication of Lemma \ref{lemmatvar}. Of course different conclusions may be reached, if the criterion for measuring the variability of residuals changes. For example, considering the standard deviation of residuals privileges quadratic regression, leading to $\sigma\big(R(S,\bm \theta) - \VaR_\alpha(R(S,\bm \theta)) \big)=49.0$, while $\sigma\big(R(R(S,\bm \theta),\bm \eta ) \big)=50.3$; note though the difference between the two is small.

\begin{figure}
	\centering

\begin{knitrout}
\definecolor{shadecolor}{rgb}{0.969, 0.969, 0.969}\color{fgcolor}
\includegraphics[width=0.8\linewidth]{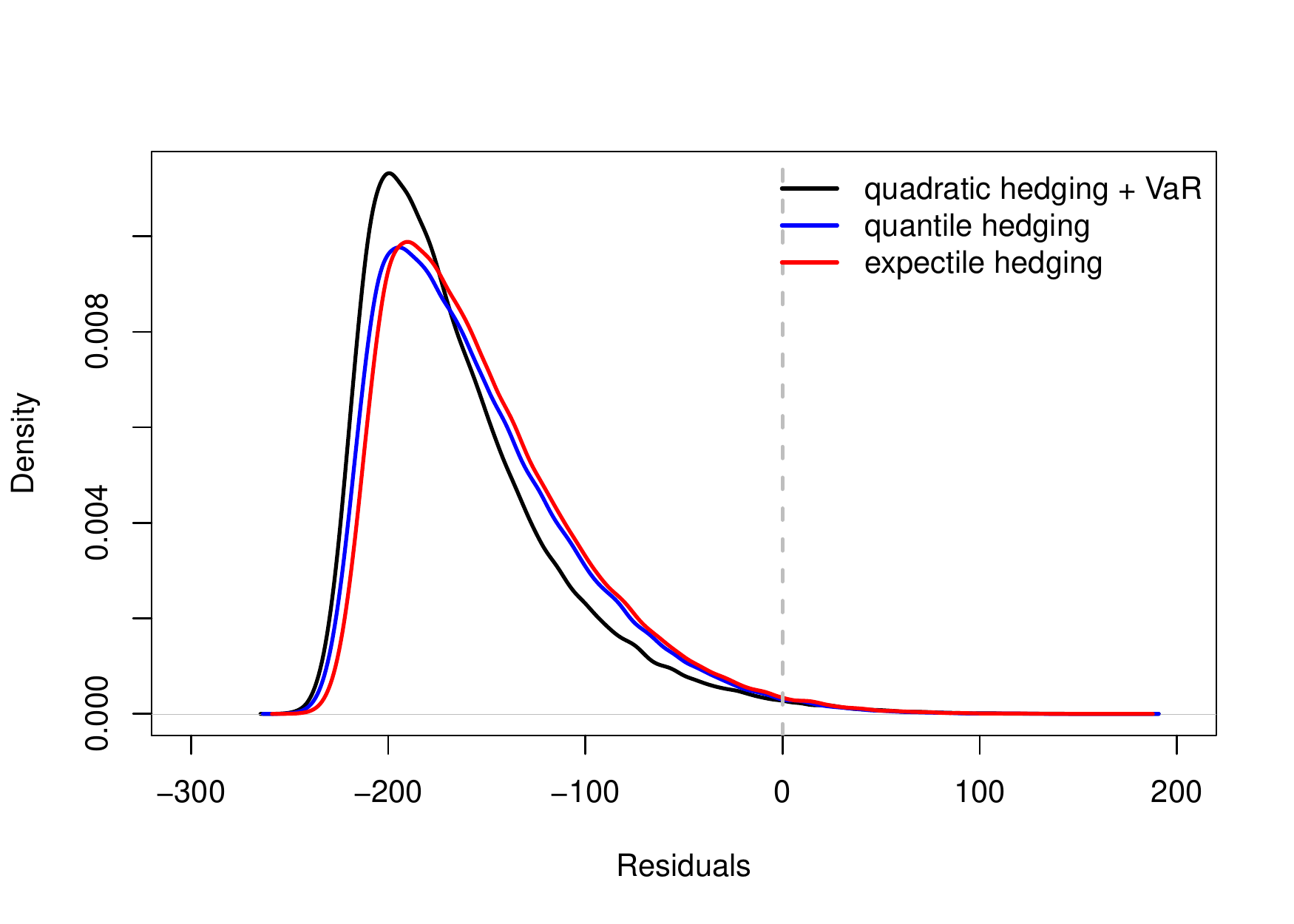} 

\end{knitrout}

	\caption{Densities of residuals, for quadratic hedging of $S$, followed by investing the VaR of the residual in the risk-free asset (blue); quantile hedging of $S$ (blue); and expectile hedging of $S$ (red).}
	\label{fig:Example2_b}
\end{figure}

Finally, we state the fair value of $S$, as calculated via the three different hedging approaches 4.-6., for cost-of-capital rate $i=0.1$:
\begin{align*}
\phi(S)&=\bm \theta\cdot \mathbf{y}+i \cdot \VaR_\alpha\big( R(S,\bm \theta)\big) =  972.6 \\
\rho(S)&=\bm \theta\cdot \mathbf{y}+i  \cdot \bm \eta\cdot \mathbf{y} =  972.4 \\
\rho^{(\ell_\tau)}(S)&=\bm \theta\cdot \mathbf{y}+i \cdot  \bm \eta^{(\ell_\tau)}\cdot \mathbf{y} =  972. 
\end{align*}
Hence, in this example, the impact on the valuation of $S$ is very limited, even though the hedging strategies and, importantly, the statistical behaviour of residuals, are different.

\end{example}

\section{Two-step valuation in a  multi-period setting} \label{sectionmultiperiod}

We extend the previous setup to a model over multiple
time periods. We consider a filtered probability space: 
$(\Omega,(\mathcal{F}_t)_{t\in\{0,1,\ldots,T\}},\p)$, where $\mathcal{F}_0=\{\emptyset,\Omega\}$, $\mathcal{F}_T=\mathcal{F}$ and  the $\sigma$-algebra $\mathcal{F}_t$ represents the information available up to time $t$, $t\in\{0,1,\ldots,T\}$. 

Again we consider  $n+1$ traded assets. We denote    $\mathbf{Y}(t)=(e^{rt},Y_1(t),\dots,Y_n(t))$ the vector of asset prices at time $t\in\{1,2,\ldots,T\}$ and assume that the asset portfolio can be freely reset at each time $t$, hence we do not require  trading strategies to be self-financing. A trading strategy is an $n+1$ vector $\bm{\beta}(t)=(\beta_0(t),\ldots,\beta_n(t))$, where each $\beta_i(t)$ is predictable (i.e. $\mathcal{F}_{t-1}$-measurable) and represents the funds invested in asset $i$ during the time interval $[t-1,t)$, for any $i=0,1,\ldots,n$ and $t\in\{1,2,\ldots,T\}$. We denote $\mathcal{B}(t)$ the set of all real-valued $\mathcal{F}_{t-1}$-measurable trading strategies available to the investor for the time interval $[t-1,t)$.

In this section, we study the problem of an insurer who needs to determine the fair valuation at any time $t<T$ for an insurance liability $S$ which matures at time $T$. In \cite{barigou2019fair}, this objective was achieved by a backward iteration in which for each time period the liability is hedged by quadratic hedging and the non-hedged residual part is priced via an actuarial valuation, e.g. the standard deviation principle. While this approach is fair in the sense of Theorem \ref{theoremfair}, it does not take into account the yearly solvency requirement in its valuation, namely that the hedging portfolio should cover the fair value of the liability with a confidence level $\alpha$. 

Here, we extend our two-step hedging approach from a one-period to a multi-period setting. In the first step of the valuation, a quadratic hedging strategy is set up for the fair value of the liability in the next period and we know by the relation \eqref{expproperty} that its expected payoff corresponds to the expected liability. In the second step of the valuation, we apply a quantile hedging strategy to the residual loss, and by construction, the yearly solvency capital requirement will be satisfied. The cost of this strategy is then included in the valuation through an appropriate cost-of-capital risk margin. 

\textcolor{black}{In our local risk-minimisation framework, hedging is carried out afresh at the beginning  of each time period. As a result, at intermediate times excess capital may be released or the need may arise for capital to be injected by the company’s shareholders. Nonetheless,} \textcolor{black}{as will be discussed below, such rebalancing costs would be typically met from the invested shareholder capital of the previous period.}

\subsection{Fair valuation by iterated  two-step valuation}

Consider an insurance liability $S$ which matures at time $T$. The quadratic and quantile hedging strategies at time $T-1$ are determined by 
\begin{align*}
&\bm \theta(T)=\arg \min _{\bm\beta \in \mathcal{B}(T)} \mathbb{E}_{T-1}\left[(S-\boldsymbol{\beta}(T) \cdot \mathbf{Y}(T))^2\right],\\
&\bm \eta(T)=\arg \min _{\bm \beta \in \mathcal{B}(T)} \mathbb{E}_{T-1}\left[ \ell_{\alpha}(S-\boldsymbol{\theta}(T) \cdot \mathbf{Y}(T)-\boldsymbol{\beta}(T) \cdot \mathbf{Y}(T)) \right],
\end{align*}
where $ \ell_{\alpha}$ is the Koenker-Bassett error given in \eqref{penfunction}. By the properties of quantile hedging, the payoff of both hedging strategies will cover the liability with a confidence level of $\alpha$, hence satisfying the regulatory constraint: 
\begin{equation*}
	\text{VaR}_{\alpha,T-1}\left(S-\boldsymbol{\theta}(T) \cdot \mathbf{Y}(T)-\boldsymbol{\eta}(T) \cdot \mathbf{Y}(T)\right)=0.
\end{equation*}
The fair value at time $T-1$ of the liability is then defined as the cost of the quadratic hedging strategy and a cost-of-capital risk margin for the quantile hedging strategy: 
\begin{equation*}
\rho_{T-1}(S)=\boldsymbol{\theta}(T) \cdot \mathbf{Y}(T-1)+i\ \boldsymbol{\eta}(T) \cdot \mathbf{Y}(T-1).
\end{equation*}
As in the static setting, the cost of the quadratic hedging strategy is covered by policyholders, while the cost of the quantile hedging strategy (interpreted as a capital requirement) is provided by shareholders, who require an interest $i$ for their investment. We can now repeat iteratively the two-step valuation until we reach the fair value at time 0, at each step hedging the fair value of one period ahead. For the fair value $\rho_{t+1}(S)$, both hedging strategies are given by
\begin{align*}
\bm \theta(t+1)&=\arg \min _{\bm\beta \in \mathcal{B}(t)} \mathbb{E}_t\left[(\rho_{t+1}(S)-\boldsymbol{\beta}(t+1) \cdot \mathbf{Y}(t+1))^2\right]\\
\bm \eta(t+1)&=\arg \min _{\bm \beta \in \mathcal{B}(t)} \mathbb{E}_t\left[\ell_{\alpha}(\rho_{t+1}(S)-\boldsymbol{\theta}(t+1) \cdot \mathbf{Y}(t+1)-\boldsymbol{\beta}(t+1) \cdot \mathbf{Y}(t+1))\right].
\end{align*}
Then, the fair value at time $t$ is given by 
\begin{equation}
\rho_{t}(S):=\boldsymbol{\theta}(t+1) \cdot \mathbf{Y}(t)+i\ \boldsymbol{\eta}(t+1) \cdot \mathbf{Y}(t),\label{fairvaluet}
\end{equation}
and the yearly solvency constraints are satisfied by construction: 
\begin{equation}\label{varconstraint}
	\text{VaR}_{\alpha,t}\left(\rho_{t+1}(S)-\boldsymbol{\theta}(t+1) \cdot \mathbf{Y}(t+1)-\boldsymbol{\eta}(t+1) \cdot \mathbf{Y}(t+1) \right)=0,\quad \forall t\in \{0,1,\dots,T-1\}.
\end{equation}

\textcolor{black}{We consider now the issue of the potential need for capital  injections at intermediate times. Consider for example the assets at the end of the time period $t$,  $\bm \theta(t)\cdot \textbf{Y}(t) +\bm\eta(t)\cdot \textbf{Y}(t)=\bm\xi(t)\textbf{Y}(t)$. Then, shareholders need to inject more capital if $\bm\xi(t)\cdot \textbf{Y}(t)<\bm\xi(t+1)\cdot \textbf{Y}(t)$ or, conversely, capital can be released if  $\bm\xi(t)\cdot \textbf{Y}(t)>\bm\xi(t+1)\cdot \textbf{Y}(t)$. Let $RB(t) = \bm\xi(t+1)\cdot \textbf{Y}(t)-\bm \xi(t)\cdot \textbf{Y}(t)$ stand for the rebalancing cost at time $t$. Focusing on a quantile hedging strategy with a high confidence level $\alpha$, VaR-neutrality implies that 
	\begin{equation}\label{eq:rebal}
		\mathbb P(\bm\theta(t)\cdot \textbf{Y}(t)+\bm \eta(t)\cdot \textbf{Y}(t)  -\rho_t(S)\geq  0)= \alpha \Leftrightarrow \mathbb P(\bm \eta(t+1)\cdot \textbf{Y}(t)(1-i) \geq RB(t) )= \alpha
	\end{equation}	Hence, with high probability, the funds $RB(t)$ required to keep satisfying the regulatory requirement at time $t$ is lower than $\eta(t+1)\cdot \textbf{Y}(t)(1-i)$, which represents the shareholder capital needed to recapitalise the portfolio from scratch, allowing also for the cost of that capital. In the unlikely event that $\bm \eta(t+1)\cdot \textbf{Y}(t)(1-i) < RB(t)$, necessary capital cannot be raised and the procedure is stopped. }

%\begin{remark}
%	We note that in general several hedging strategies can achieve the VaR constraint \eqref{varconstraint}. As an alternative to the quantile hedging strategy, one can for instance set up a risk-free capital buffer equal to the VaR of the residual loss as discussed in Lemma \ref{lemmatvar}. As pointed out by \cite{rockafellar2008risk}, the quantile hedging strategy minimises the TVaR deviation of the hedging error subject to the $\alpha$-quantile of the probability distribution of the hedging error being 0. In consequence, while satisfying the regulatory constraint, quantile hedging is a risk minimiser and takes the tails of the residual loss into account.
%\end{remark}
Time-consistency is an important concept for characterising the relationship between  different static valuations. It means that the same value is assigned to a liability regardless of whether it is calculated in one step or in two steps backwards in time. 
\begin{definition}
A sequence of valuations $\left(\rho_{t}\right)_{t=0}^{T-1}=\{\rho_0,\dots,\rho_{T-1}\}$ is time-consistent if:
\begin{equation}\label{tcrequirement}
\rho_{t}(S)=\rho_{t}\left(\rho_{t+1}(S)\right), \quad \text { for any liability } S \text { and } t=0,1, \ldots, T-2. 
\end{equation}
\end{definition}

By construction, our valuation framework is time-consistent. Indeed, from  \eqref{fairvaluet} we see that the fair valuation at time $t$ is obtained by applying the one-period two-step valuation on $\rho_{t+1}$, which itself comes from the two-step valuation on $\rho_{t+2}$ and so on. Therefore, the time-consistency condition \eqref{tcrequirement} is directly satisfied.

We now provide a simple example of our multi-period two-step valuation in a multivariate normal setting. In such a framework, explicit solutions for the fair valuation can be obtained. 

\begin{example}\label{normalexample}

We consider a multi-period model with two assets only: $\bY(t)=(Y_{0}(t),Y_{1}(t))$,  $t=0,1,\dots, T$. We assume that $Y_{0}(t)=1$ for all $t$ (i.e. the risk-free asset has zero interest rate) and  that the returns of the risky asset, $R_t=\frac{Y_{1}(t)}{Y_{1}(t-1)},~t=1, \dots, T$ are i.i.d. Moreover, we write the liability development of $S$ as
\begin{align*}
    S=s_0+S_1+\dots+S_T,
\end{align*}
where $s_0$ is constant, $S_t$ is the liability development at time $t$ which is $\cF_t$-measurable, $S_t$'s are independent, and $\E[S_t]=0$ for $t=1, \dots, T, $ we remark that such a decomposition was also considered in \cite{tsanakas2013market}. Furthermore, $(R_t,S_t)$ are bivariate normally distributed, with constant correlation $c$. For the sake of brevity we define:
\begin{align*}
    \kappa ~&= \frac{\E[R_t-1]}{\sigma(R_t)},\\
    \gamma_t &= \sigma(S_t),
\end{align*}
where $\kappa$ is constant, while $\gamma_t$ is deterministic but time varying. First, we determine the quadratic hedging strategy at time $T-1$:
\begin{equation*}
	\bm \theta(T):=\argmin_{\bm \beta\in\mathcal{B}(T)}\E_{T-1}\left[(S-\bm \beta\cdot \bY(T))^2\right].\\
\end{equation*}
By standard conditional least-squares arguments, we get:
    \begin{align*}
    \theta_{1}(T)&=\frac{\text{Cov}_{T-1}(S,Y_{1}(T))}{\text{Var}_{T-1}(Y_1({T}))},\\
    \theta_{0}(T)&=\E_{T-1}[S]-\theta_{1}(T)\E_{T-1}[Y_{1}(T)],
\end{align*}
\textcolor{black}{with the property: 
\begin{equation*}
	\E_{T-1}[S]=\E_{T-1}[\bm \theta(T) \cdot \bY(T)],
\end{equation*}}
so that the cost of the quadratic hedging strategy is given by:\begin{align*}
    \bm \theta(T) \cdot \bY(T-1)&=\E_{T-1}[S]-\frac{\text{Cov}_{T-1}(S,Y_{1}(T))}{\text{Var}_{T-1}(Y_{1}(T))}\left(\E_{T-1}[Y_{1}(T)]-Y_{1}(T-1)\right)\\
    &=s_0+\sum_{j=1}^{T-1}S_j-\kappa \gamma_T c.
\end{align*}
In a second step, we determine the risk margin by computing the quantile hedging strategy for the residual:
\begin{align*}
    \bm\eta(T)&:=\argmin_{\bm \beta\in\mathcal{B}(T)} \E_{T-1}\left[ \ell_{\alpha}(S-\bm\theta(T)\cdot \bY(T)-\bm\beta\cdot \bY(T)) \right].
\end{align*}
By the normality assumption, we have that 
$$
(S-\bm\theta(T)\cdot\bY(T))\perp Y_{1}(T)\implies \eta_{1}(T)=0,
$$
i.e. there is no investment in the risky asset. The related cost is then given by
\begin{align*}
 \bm \eta(T)\cdot \bY(T-1)&=\VaR_{\alpha,T-1} (S-\bm\theta(T)\cdot \bY(T)),\\
&=\E_{T-1}[S-\bm\theta(T)\cdot \bY(T)]    +\lambda \sigma_{T-1}[S-\bm\theta(T)\cdot \bY(T)],\qquad(\lambda:=\Phi^{-1}(\alpha))\\
&=\lambda \gamma_T\sqrt{1-c^2}.
    \end{align*}
The resulting fair value of $S$ is then given by
\begin{align*}
    \rho_{T-1}(S)&=\bm \theta(T)\cdot \bY(T-1)+i\bm \eta(T)\cdot \bY(T-1)\\
    &=s_0+\sum_{j=1}^{T-1}S_j-\kappa \gamma_T c+i\lambda \gamma_T\sqrt{1-c^2}.
\end{align*}
By noting that with respect to $\cF_{T-2}$, the only random element in $\rho_{T-1}(S)$ is $S_{T-1}$, we find that 
 \begin{align*}
	\rho_{T-2}(S)&=  \bm \theta(T-1)\cdot \bY(T-2)+i\bm \eta(T-1)\cdot \bY(T-2)  \\
	&=s_0+\sum_{j=1}^{T-2}S_j-\kappa (\gamma_T+\gamma_{T-1}) c+i\lambda (\gamma_T+\gamma_{T-1})\sqrt{1-c^2}.
\end{align*}

The cost of capital of \textbf{all} future capital requirements is used when valuing the liability at a particular time, which is in agreement with the Solvency II risk margin. An inductive argument would lead to:
  \begin{equation*}
   \rho_{0}(S)= s_0 -\kappa c \sum_{j=1}^T \gamma_j +i\lambda \sqrt{1-c^2}\sum_{j=1}^T \gamma_j.\\
 \end{equation*}

Hence, the fair value of the liability is composed of three terms: first the expected liability $s_0$, a second term accounting for the dependence between the excess risky asset returns and the liability increments, and, third, is the cost-of-capital risk margin which takes into account the non-hedgeable risk. We note that the risk margin vanishes as $c\to1$ since the liability can be completely hedged in this case.
 \end{example}

\section{Dynamic hedging by neural networks}\label{sectionneuralnetwork}

The backward recursive scheme presented above is similar to the one solving the local quadratic hedging problem (\citealp{follmer1988hedging}), which is usually implemented by dynamic programming. Since the optimal hedging strategy is a function of conditional expectations, a popular technique consists of constructing a Markov grid with the use of a multinomial tree model for the risky asset dynamics (see e.g. \citealp{augustyniak2017assessing}, \citealp{coleman2006hedging} and \citealp{godin2016minimizing}). The Markov property is key to reducing the dimensionality of the dynamic programming algorithms, because it implies that conditional expectations with respect to $\mathcal{F}_{t}$ reduce to conditional expectations with respect to assets prices at time $t$ only, i.e. $\bY(t)$.

In this paper, we present a general procedure to solve the dynamic quadratic-quantile hedging problem in a Markovian setting. The procedure involves an iterated non-linear optimisation which is solved by neural networks. We note however that other non-linear regression methods (such as gradient boosted trees) can be used as well. \textcolor{black}{As suggested by one referee, another possibility is to solve the algorithm via a Least-Square Monte-Carlo (LSMC) approach similar to \cite{ahmad2020}}. Moreover, we present the algorithm for the mean-quantile valuation but, in principle, the whole procedure remains valid for any loss function $\ell$.

\subsection{General algorithm for the dynamic hedging problem}

We recall that the iterated dynamic hedging problem is given by 
\begin{align}\label{iteratedsystem}
\begin{split}
\bm \theta(t+1)&:=\argmin_{\bm \beta\in\mathcal{B}(t+1) }\E_{t}\left[(\rho_{t+1}(S)-\bm \beta \cdot \bY(t+1))^2\right]\\
\bm\eta(t+1)&:=\argmin_{\bm \beta\in \mathcal{B}(t+1)} \E_{t}\left[ \ell_{\alpha}(\rho_{t+1}(S)-\bm\theta(t+1)\cdot \bY(t+1)-\bm\beta\cdot \bY(t+1)) \right]\\
\rho_{t}(S)&:=\boldsymbol{\theta}(t+1) \cdot \mathbf{Y}(t)+i\ \boldsymbol{\eta}(t+1) \cdot \mathbf{Y}(t)
\end{split}
\end{align}
for any $t=T-1,\dots,0$, starting with $\rho_{T}(S)=S$ and $\ell_{\alpha}$ is the Koenker-Bassett error \eqref{penfunction}.

From now on, we assume that there exists an $m$-dimensional process $ \bm Z(t)$ which drives all the processes of interest. In an insurance context, $Z(t)$ may represent for instance the asset processes $Y(t)$ and the number of policyholders alive at time $t$. The filtration $\mathcal{F}$ is generated by all observations about the process $\bm Z$: $\mathcal{F}_t=\sigma(\bm Z(u)\mid u \leq t)$.

To avoid path-dependence issues in the hedging framework and reduce the complexity of the dynamic hedging algorithm, we make the standard assumption that $\bm Z$ is Markovian. We note that many standard financial and actuarial processes do follow the Markov property.
\begin{assumption}\label{markovassumption}
	$\bm Z$ has the Markov property with respect to the filtration $\mathcal{F}$, i.e. $$\mathbb{P}(\bm Z(t+1)\leq \bm x\mid \mathcal{F}_t)=\mathbb{P}(\bm  Z(t+1)\leq \bm x\mid \bm Z(t)).$$
\end{assumption}
By  Assumption \ref{markovassumption}, the candidate hedging strategies in the dynamic hedging problem \eqref{iteratedsystem} can be expressed as $g(\bm Z(t))$ where $g: \mathbb{R}^{m}\rightarrow \mathbb{R}^{n+1}$ is a function which takes the random process at time $t$ as inputs and outputs the hedging positions in the $(n+1)$ assets. \textcolor{black}{Indeed, from the Markov property, the optimal strategy $\boldsymbol{\theta}(t+1)$ and $\boldsymbol{\eta}(t+1)$ for the period $[t,t+1]$ can only depend on the risk drivers at time $t$ and previous observations do not provide more information. Since we cannot consider numerically any possible function $g$, we assume that the optimal function $g$ belongs to a family of non-linear functions $\mathcal{G}$ which, in this paper, will correspond to a neural network discussed in the next section. }

Moreover, in order to approximate the expectation operator in \eqref{iteratedsystem}, we use a Monte-Carlo sample by generating $M$ random simulations. \textcolor{black}{Given the $\mathbb{P}$-dynamics of the stochastic process $\left\{ Z(t) \right\}_{t=0,\dots,T}$, one can simulate $M$ random observations $\bm Z^{(i)}(t)$ of the random process at time $t$, for $t=0,1, \ldots, T$.} Therefore, the dynamic hedging problem \eqref{iteratedsystem} can be expressed by the following iterative algorithm:
\begingroup
\allowdisplaybreaks
\begin{align}\label{iteratedsystemapprox}
	\bm \theta(t+1)&:=g_{t+1}\left(\bm Z(t)\right) \nonumber\\
	\text{with } g_{t+1}& = \argmin_{g\in\mathcal{G}}\frac{1}{M}\sum_{i=1}^M \left( \rho^{(i)}_{t+1}(S)- g(\bm Z^{(i)}(t)) \cdot \bY^{(i)}(t+1)\right)^2\nonumber\\
	\bm\eta(t+1)&:=h_{t+1}\left(\bm Z(t)\right)\nonumber\\
	\text{with } h_{t+1} & = \argmin_{g\in\mathcal{G}}\frac{1}{M}\sum_{i=1}^M \ell_{\alpha} \left( \rho^{(i)}_{t+1}(S)- \bm\theta(t+1)\cdot \bY^{(i)}(t+1)- g(\bm Z^{(i)}(t)) \cdot \bY^{(i)}(t+1)\right)\nonumber\\
	\rho^{(i)}_{t}(S)&:=\boldsymbol{\theta}^{(i)}(t+1) \cdot \mathbf{Y}^{(i)}(t)+i\ \boldsymbol{\eta}^{(i)}(t+1) \cdot \mathbf{Y}^{(i)}(t).
\end{align}\endgroup
where  \begin{align*}
\boldsymbol{\theta}^{(i)}(t+1)&:= g_{t+1}(\bm Z^{(i)}(t))\\
\boldsymbol{\eta}^{(i)}(t+1)&:= h_{t+1}(\bm Z^{(i)}(t))
\end{align*}
for any $t=T-1,\dots,0$, starting with $\rho^{(i)}_{T}(S)=S^{(i)}$. \textcolor{black}{Hence, we observe that for each time period $[t,t+1]$, the algorithm minimises the aggregate hedging error over all sample paths, taking into account that the hedging strategy should be $\mathcal{F}_t$-measurable.} We note that the algorithm \eqref{iteratedsystemapprox} provides the fair valuation of $S$ at any time $t$ as well as the quadratic and quantile hedging strategies. Indeed, by Lemma \ref{hedgingdecomposition}, the quantile hedging strategy $\bm \xi $ for $\rho_{t+1}(S)$ is the sum of the quadratic hedging strategy $\bm \theta $ for $\rho_{t+1}(S) $ and the quantile hedging strategy $\bm\eta$ for the residual loss. Algorithm \ref{algo1} presents the procedure in its compact form.
\begin{algorithm}[h!]
	\caption{Backward resolution of the dynamic fair valuation problem}
	\begin{algorithmic}[1]
		\State 	$\rho_T \leftarrow S$
		\For{$t=T-1,T-2,...,0$}
		\State $g_{t+1} = \argmin_{g\in\mathcal{G}}\frac{1}{M}\sum_{i=1}^M \left( \rho^{(i)}_{t+1}(S)- g(\bm Z^{(i)}(t)) \cdot \bY^{(i)}(t+1)\right)^2$
		\State $h_{t+1}= \argmin_{g\in\mathcal{G}}\frac{1}{M}\sum_{i=1}^M \ell_{\alpha} \left( \rho^{(i)}_{t+1}(S)- g(\bm Z^{(i)}(t)) \cdot \bY^{(i)}(t+1)\right)$
		\State $\rho^{(i)}_{t}(S) = g_{t+1}\left(\bm Z^{(i)}(t)\right) \mathbf{Y}^{(i)}(t) + i \left( h_{t+1}\left(\bm Z^{(i)}(t)\right)-g_{t+1}\left(\bm Z^{(i)}(t)\right) \right) \mathbf{Y}^{(i)}(t) $
		\EndFor
	\end{algorithmic}\label{algo1}
\end{algorithm}

\subsection{Implementation by neural networks}

In order to implement the Algorithm \ref{algo1}, we need to resort to a non-linear optimisation. In this paper, we implement the algorithm by the use of neural networks (NNs) as these are well suited for this problem. The universal approximation theorem of \cite{hornik1989multilayer} states that networks can approximate any continuous function on a compact support arbitrarily well if we allow for arbitrarily many neurons $q_{1} \in \mathbb{N}$ in the hidden layer. From the universal approximation theorem, we do know that the optimal $g$ function can be approached by a neural network with sufficient layers and neurons. The number of neurons and layers that we need is rather subjective and subject to empirical studies. Here, we follow the work of \cite{fecamp2019risk} and consider three hidden layers of 10 neurons with Relu activation functions. For completeness, we briefly explain the mathematical structure of a neural network in the next paragraph, see \cite{goodfellow2016deep} for more details.

The neural network takes an input of dimension $m$ (the dimension of the risk process $\bm Z$) and outputs a vector of dimension $n+1$ (the number of units invested in the $(n+1)$ assets). The network is characterised by a number of layers $L+1 \in \mathbb{N} \backslash\{1,2\}$ with $m_{l}, l=0, \ldots, L,$ the number of neurons (units or nodes) on each layer: the first layer is the input layer with $m_{0}=m$, the last layer is the output layer with $m_{L}=n+1,$ and the $L-1$ layers between are called hidden layers, where we choose for simplicity the same dimension $m_{l}=p, l=1, \ldots, L-1$. The neural network is then defined as the composition
$$
x \in \mathbb{R}^{m} \mapsto \mathcal{N}(x)=A_{L} \circ \varphi \circ A_{L-1} \circ \ldots \circ \varphi \circ A_{1}(x) \in \mathbb{R}^{n+1}.
$$
Here, $A_{l}, l=1, \ldots, L$, are affine transformations represented by
$$
A_{l}(x)=\mathcal{W}_{l} x+\beta_{l},
$$
for a matrix of weights $\mathcal{W}_{l}$ and a ``bias'' term $\beta_{l}$. The non-linear function $\varphi: \mathbb{R} \rightarrow \mathbb{R}$ is called the activation function and is applied component-wise on the outputs of $A_{l},$ i.e., $\varphi\left(x_{1}, \ldots, x_{p}\right)=\left(\varphi\left(x_{1}\right), \ldots, \varphi\left(x_{p}\right)\right)$. Standard examples of activation functions are the sigmoid, the ReLU, the elu and tanh. 
\section{Numerical example: Portfolio of equity-linked contracts}\label{numericalexamples}

In this section, we determine the multi-period two-step valuation for a portfolio of equity-linked contracts, extending the one-period Example \ref{exampleELoneperiod}. We consider a portfolio of equity-linked life insurance contracts, which guarantee a survival benefit to all policyholders who are still alive at maturity time $T$. The insurance liability can be expressed by
\begin{equation}
S=N(T)\times\max\left(  Y^{(1)}(T),K\right)  , \label{payoff}%
\end{equation}
with $N(t)$ a mortality process counting the number of survivors at time $t$ 
among an initial population of $l_{x}$ policyholders of age $x$, $Y^{(1)}(t)$ a risky asset process, and $K$ a fixed guarantee level. To account for the dependence between financial and actuarial risks, we assume that the dynamics of the stock process and the population force of mortality are given by
\begin{align}
	dY^{(1)}(t)  &  =Y^{(1)}(t)\left(  \mu dt+\sigma dW_{1}(t)\right) \label{ds}\\
	d\lambda_{x}(t)  &  =c\lambda_{x}(t)dt+\eta dW_{2}(t), \label{dlambda}%
\end{align}
with $c,\eta,\mu$ and $\sigma$  positive constants,  $W_{1}(t)=\delta
W_{2}(t)+\sqrt{1-\delta^{2}}X(t)$, and $W_{2}(t)$ and $X(t)$ are independent
standard Brownian motions and \textcolor{black}{$-1\leq \delta \leq 1$} represents the dependence between the stock and the force of mortality. We note that the stochastic force of mortality $\lambda_{x}(t)$ represents the systematic mortality risk, namely the risk that the whole population lives more (or less) than expected. The Ornstein--Uhlenbeck process without mean reversion \eqref{dlambda} was  among others considered in \cite{luciano2008bis} and \cite{luciano2017single}  and was found to provide an appropriate fit to cohort life tables.

The survival function is then defined by
\[
S_{x}(t):=\mathbb{P}\left(  T_{x}>t\right)  =\exp\left(  -\int_{x}%
^{x+t}\lambda_{x}(s)ds\right)  \text{,}%
\]
where $T_{x}$ is the remaining lifetime of an individual who is aged $x$ at
time $0$.

Moreover, to express the unsystematic mortality risk and pooling effects, deaths of individuals are assumed to be independent events
conditional on the population mortality. Further, if we denote $D(t+1)$ the number of deaths
during year $t+1$, the dynamics of the number of active contracts can be
described as a nested binomial process as follows: $N(t+1)=N(t)-D(t+1)$ with
$D(t+1)|N(t),q_{x+t}\sim Bin(N(t),q_{x+t})$. Here, $q_{x+t}$ represents the
one-year death probability%
\[
q_{x+t}:=\mathbb{P}\left(  T_{x}\leq t+1|T_{x}>t\right)  =1-\frac{S_{x}%
	(t+1)}{S_{x}(t)},\text{ for }t=0,\ldots,T-1.
\]
Knowing the dynamics of $N(t)$ and $Y^{(1)}(t)$, we can simulate $M$
scenarios for the mortality and the equity risk factors for $t=1,\ldots,T$. It is clear that in this example, the observable processes of interest at time $t$ are the stock price and the number of survivals: $\bm Z(t):=(Y_{1}(t),N(t))$. Therefore, we have two neural networks: 
\begin{align}
	g_{t+1}: \mathbb{R}^2 \to \mathbb{R}^2, &\ (Y_{1}(t),N(t)) \mapsto  g_{t+1}(Y_{1}(t),N(t))=\bm \theta(t+1),\label{thetaexp}\\
	h_{t+1}: \mathbb{R}^2 \to \mathbb{R}^2, &\ (Y_{1}(t),N(t)) \mapsto  h_{t+1}(Y_{1}(t),N(t))=\bm \xi(t+1),\label{xiexp}
\end{align}
corresponding to the quadratic and quantile hedging strategies for the portfolio of equity-linked contracts, respectively. 

Hereafter, we provide a numerical analysis for the fair dynamic valuation of the insurance liability $S$ introduced above. Our numerical results are obtained by generating $200000$ sample paths for $N(t)$ and $Y^{(1)}(t),$ for $t=1, \ldots, T .$ The benchmark parameters for the financial market are $r=0.01, \mu=0.02$ $\sigma=0.1, K=1,\delta=-0.5$ and $Y^{(1)}(0)=1 .$ The mortality parameters $\left(\lambda_{x}(0)=0.0087, c=0.0750, \xi=0.000597\right)$ follow from \cite{luciano2017single} and correspond to UK male individuals who are aged 55 at time $0$. We assume that there are $l_{x}=1000$ initial contracts at time 0 with a maturity of $T=10$ years.

Figure \ref{figure2} represents prediction intervals for the evolution of the fair valuations, $\rho_t (S)$, for $t=0,\dots,T-1$, obtained by the NN Algorithm \ref{algo1} (left) along with the final payoff $S$ (right). We observe that, as the maturity of the contract increases, the confidence intervals are wider due to the higher uncertainty. Moreover, we remark that the evolution of the fair value through time is smooth and provides a good match to the final payoff.
\begin{figure}[h]
	\centering
	\includegraphics[width=\linewidth]{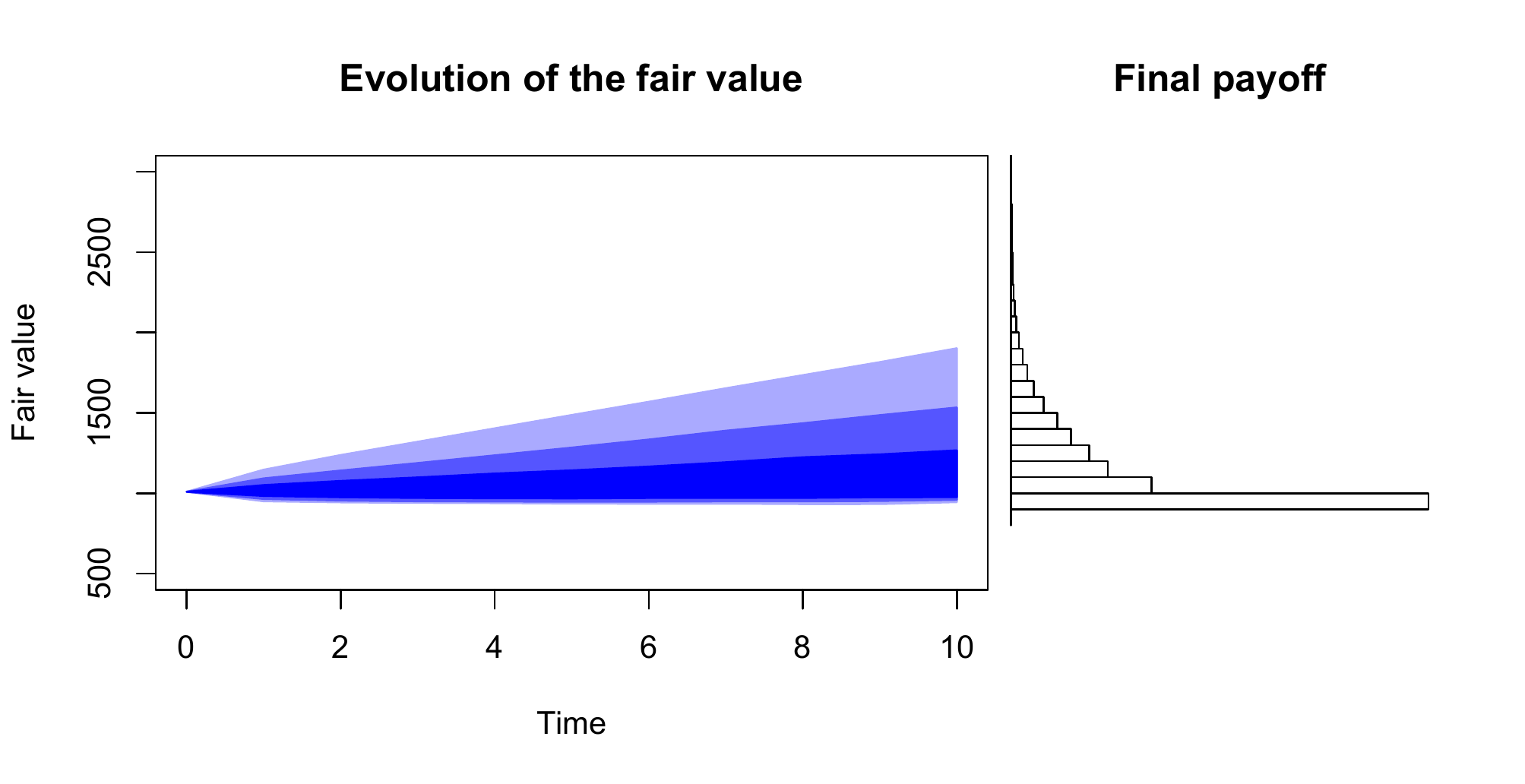}
	\caption{Left: Evolution of the fair valuation from time $0$ to maturity time $T=10$. Right: Histogram of  the final payoff $S=N(T)\times\max\left(  Y^{(1)}(T),K\right)$. Shades in the fan represent prediction intervals at the $50\%$, $80\%$ and $95\%$ level.}
	\label{figure2}
\end{figure}

So far, we did not discuss the rebalancing cost of the quantile hedging portfolio. We recall that, by construction, the hedging portfolio is rebalanced every year in order to satisfy the yearly solvency constraints: 
\begin{equation}\label{varconstraint2}
\text{VaR}_{\alpha,t}\left(\rho_{t+1}(S)-\boldsymbol{\xi}(t+1) \cdot \mathbf{Y}(t+1) \right)=0,\quad \forall t\in \{0,1,\dots,T-1\}.
\end{equation}
Therefore, if e.g. $\alpha=0.95$, there is a probability of at most $0.05$ that the hedging portfolio will not be sufficient to cover the (fair value of) liability. However, there is a priori no guarantee that the payoff of the hedging portfolio will be sufficient to cover the cost of the hedging portfolio for the next year \textcolor{black}{ -- see \eqref{eq:rebal} and the surrounding discussion.} The rebalancing cost at time $t$ is given by
\begin{equation*}
\textcolor{black}{RB}(t)=\boldsymbol{\xi}(t+1) \cdot \mathbf{Y}(t)-\boldsymbol{\xi}(t) \cdot \mathbf{Y}(t),\quad \forall t\in \{1,\dots,T-1\}.
\end{equation*}

Based on our neural network approximation, Figure \ref{figure3} depicts prediction intervals for the rebalancing cost of the hedging portfolio for any $t=1,\dots,T-1$ along with the total rebalancing cost: 
\begin{equation*}
\text{Total rebalancing cost}= \sum_{t=1}^{T-1} e^{-rt}\left(\boldsymbol{\xi}(t+1) \cdot \mathbf{Y}(t)-\boldsymbol{\xi}(t) \cdot \mathbf{Y}(t) \right).
\end{equation*}
First, we notice that intervals of the yearly costs are approximately centered around zero, meaning that there is no yearly rebalancing cost \textit{on average}. Moreover, we can observe that with high probability, the rebalancing cost is lower than $40$, which is approximately $3\%$ of the expected liability. On the aggregate level, Figure \ref{figure3} also shows that that, on average, the rebalancing cost is very low and with high probability ($\alpha=0.95$), the total rebalancing cost will be no more than $10\%$ of the expected liability. Therefore, the cost of the iterated quantile hedging appears to lie at reasonable levels. On Figure \ref{figure4}, we show the final loss of the dynamic hedging strategy. We observe that, with high probability, the hedging portfolio will cover the liability $S$, which follows from the property \eqref{varconstraint2} of quantile hedging. \textcolor{black}{From an economic standpoint, Figure \ref{figure3} illustrates that once the quantile hedging strategy is set up at time 0, we do not expect that significant additional funds need to be raised from investors to set up the quantile hedging strategy for the following time periods.}
\begin{figure}[tbp]
	\centering
	\begin{minipage}{0.4\textwidth}
		\centering
		\includegraphics[width=\textwidth]{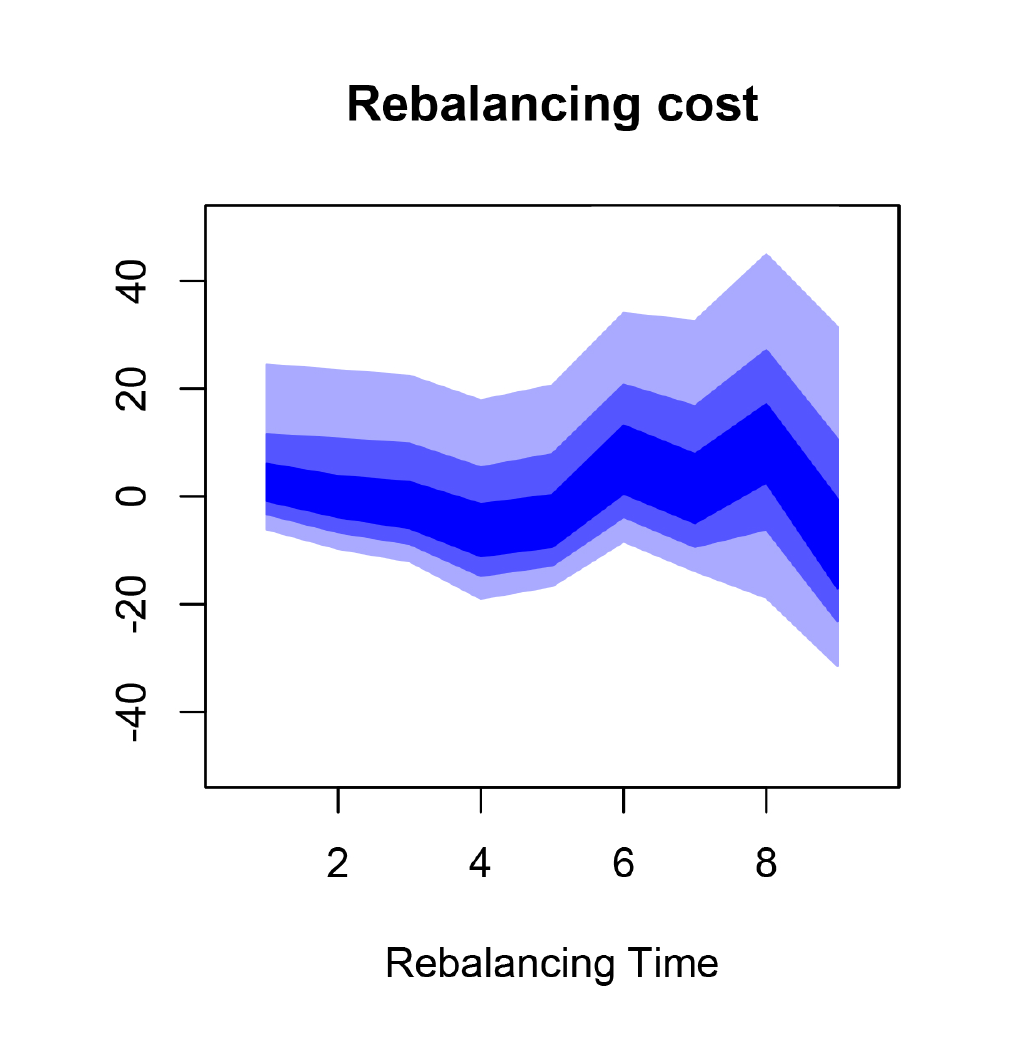} 
	\end{minipage}\hfill
	\begin{minipage}{0.6\textwidth}
		\centering
		\includegraphics[width=\textwidth]{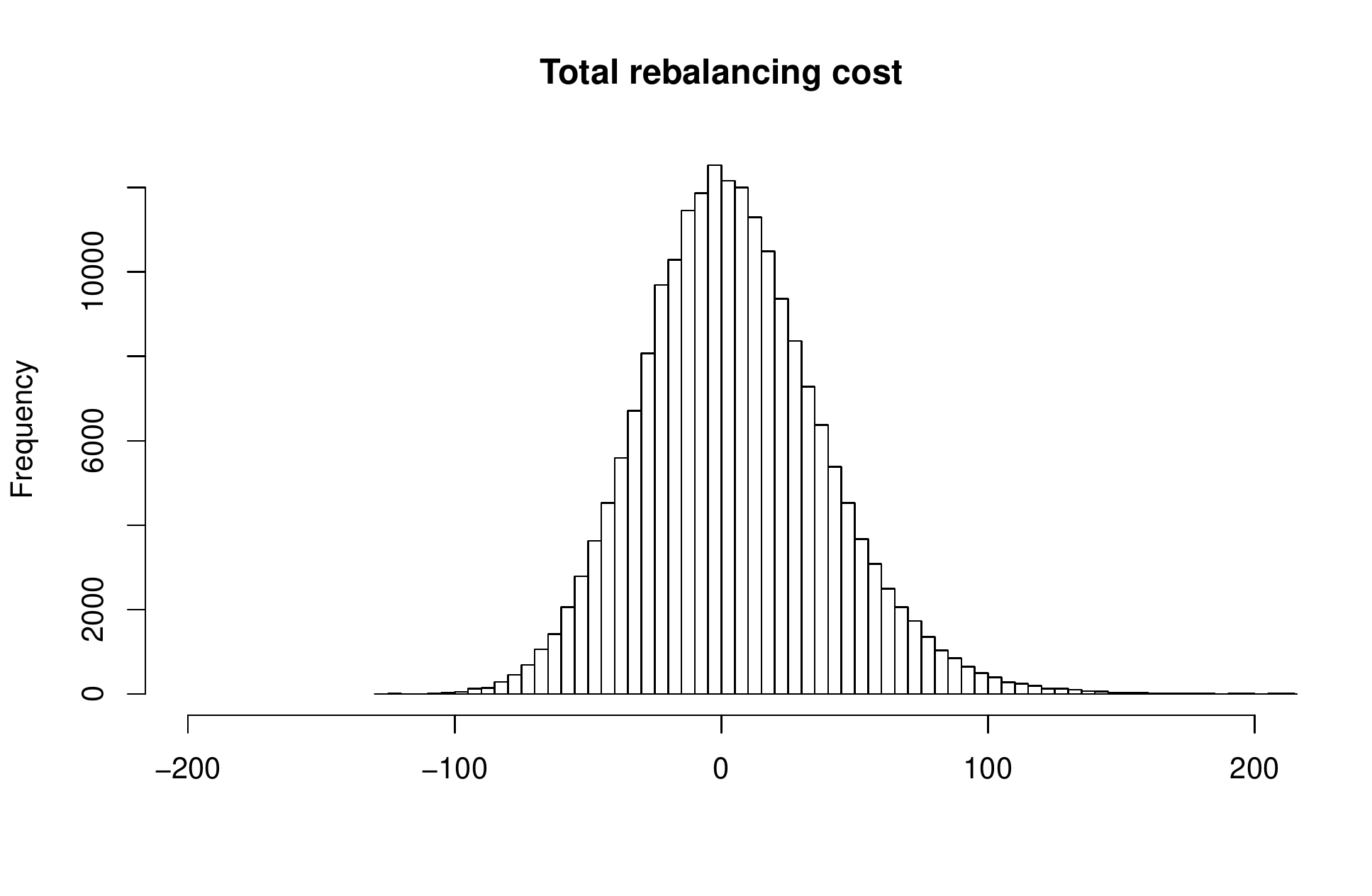}
	\end{minipage}
\caption{Left: Rebalancing cost of the hedging portfolio at any rebalancing times $t=1,\dots,T-1$. Right:  total rebalancing cost. Shades in the fan represent prediction intervals at the $50\%$, $80\%$ and $95\%$ level.}
\label{figure3}
\end{figure}

\begin{figure}[tbp]
	\centering
	\includegraphics[width=0.6\linewidth]{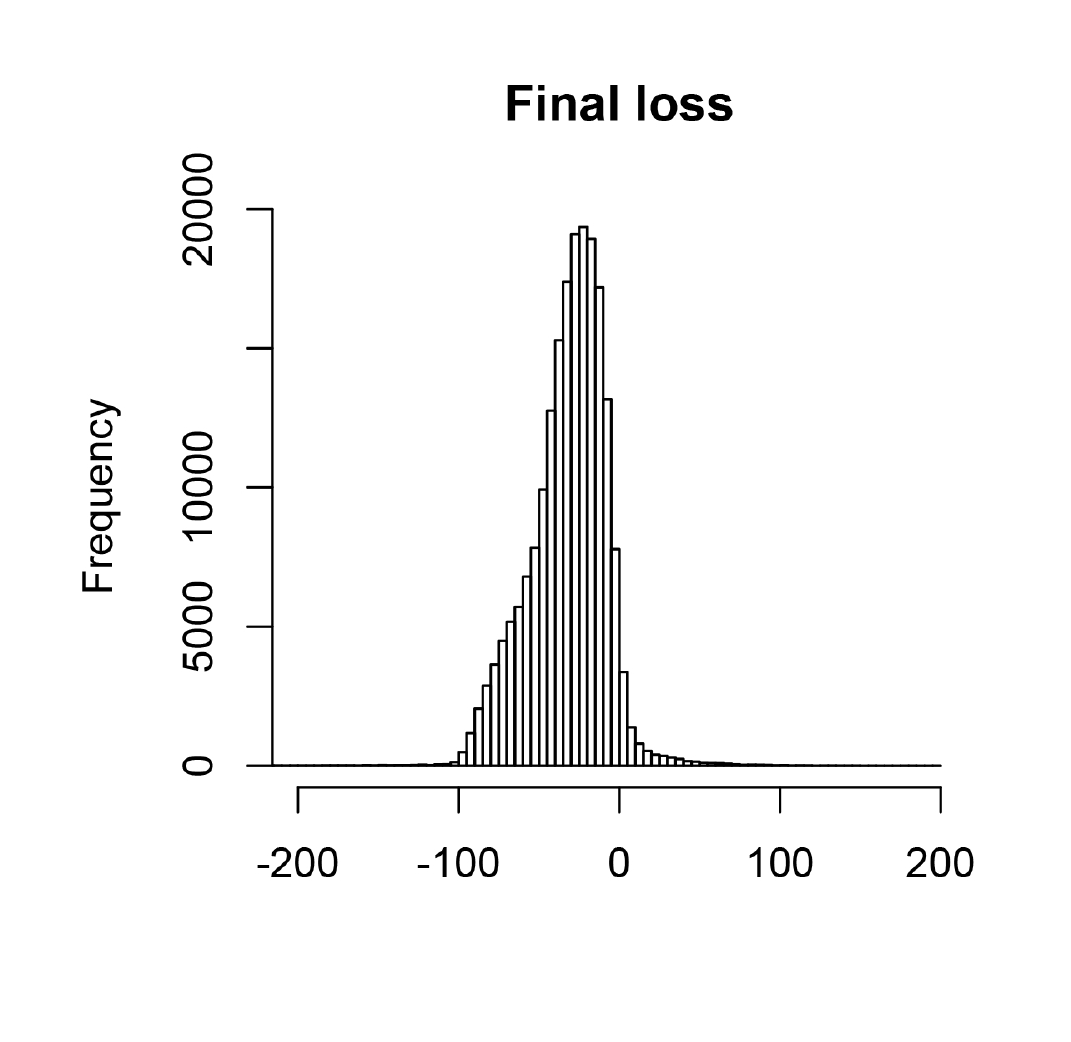}
	\caption{Histogram of the final loss  $S-\boldsymbol{\xi}(T) \cdot \mathbf{Y}(T)$.}
	\label{figure4}
\end{figure}

The neural network estimation allows us to study the non-linearities in the quantile hedging strategy. By expression \eqref{xiexp}, the neural network delivers two outputs for any time $t$, corresponding to the investment in the risk-free asset $Y_{0}(t)$ and $Y_{1}(t)$, respectively. Figure \ref{figure5} represents the number of assets held at time $t=5$ as a function of the stock price at time $5$. As expected, we notice that the investment in the stock is an increasing function of the stock price to better match the terminal liability. On the other hand, due to a compensating effect, the risk-free investment is a decreasing function of the stock price and reaches zero for high stock prices.
\begin{figure}[tbp]
	\centering
	\includegraphics[width=0.8\linewidth]{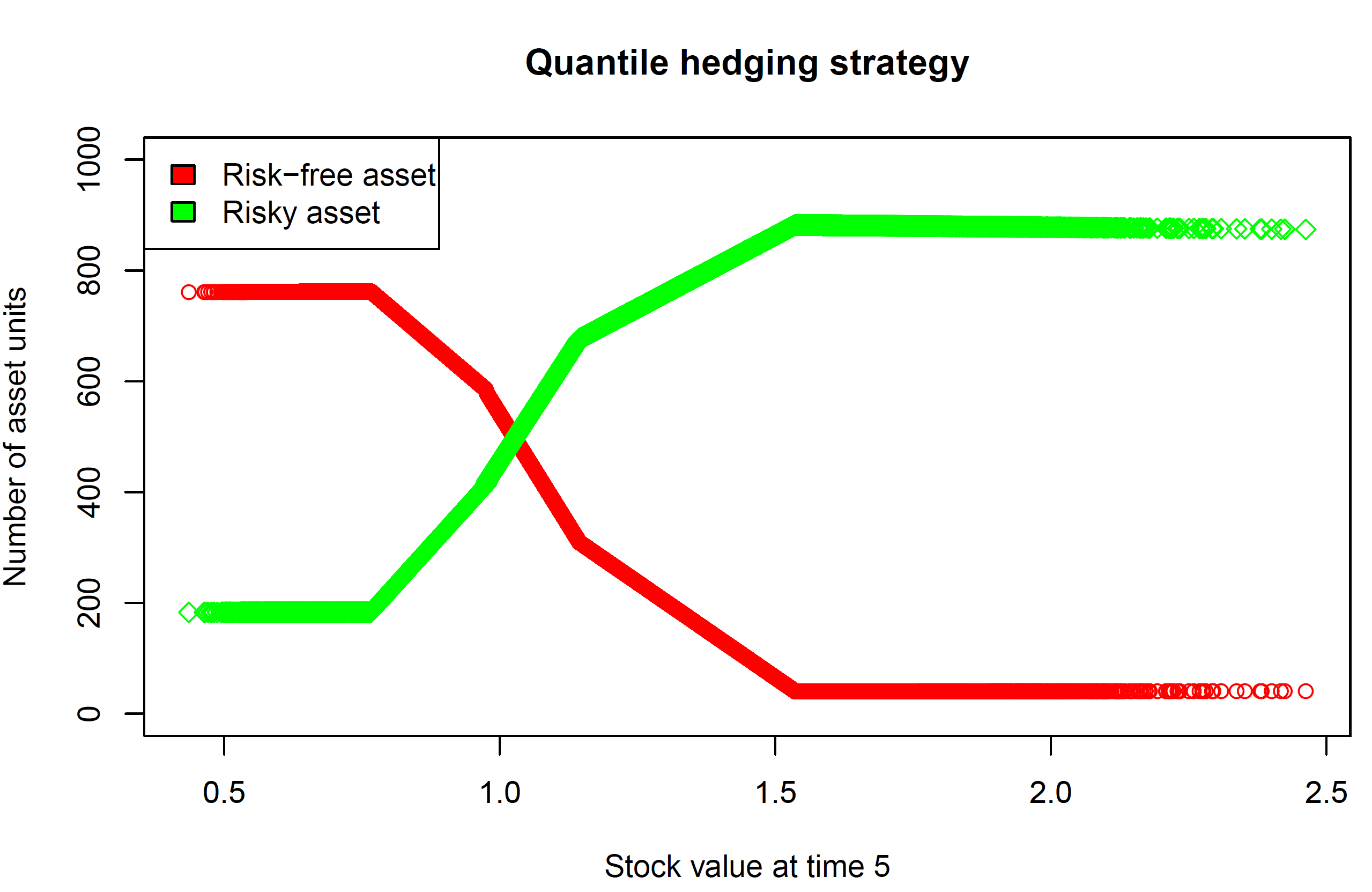}
	\caption{Number of asset units bought at time $t=5$ in the risk-free asset and risky asset under the quantile hedging strategy as function of the asset price $Y^{(1)}(5)$. This strategy corresponds to the expression \eqref{xiexp}: $h_{6}(Y_{1}(5),N(t))=\bm \xi(6)$, with fixed mortality $N(t)=\mathbb{E}[N(5)]$.}
	\label{figure5}
\end{figure}

Finally, we  study key metrics of the quantile hedging residuals in order to assess the accuracy of the neural network algorithm. By \cite{rockafellar2008risk} and \cite{rockafellar2013quadrangle}, it is well-known that the quantile hedging strategy  satisfies the relations:
\begin{align*}
\text{VaR}_{\alpha,t}\left(\rho_{t+1}(S)-\boldsymbol{\xi}(t+1) \cdot \mathbf{Y}(t+1) \right)&=0,\quad \forall t\in \{0,1,\dots,T-1\},\\
 \mathbb{E}\left[\ell_\alpha(\rho_{t+1}(S)-\boldsymbol{\xi}(t+1) \cdot \mathbf{Y}(t+1))\right]&=\dTVaR \left[\rho_{t+1}(S)-\boldsymbol{\xi}(t+1) \cdot \mathbf{Y}(t+1)\right],
\end{align*}
for all $t$ in $\{0,1,\dots,T-1\}$. An appropriate quantile hedging algorithm should therefore have residuals with a VaR close to zero and the average Koenker-Bassett error close to the TVaR deviation. Table \ref{residuals} reports the VaR of the residuals, the average Koenker-Bassett error and the TVaR deviation for all $t$. We observe that the empirical VaR is indeed close to zero and small compared to the expected payoff $ \mathbb{E}\left[S\right]\approx1162$. Moreover, the K-B error is close to the TVaR deviation, hence showing the accuracy of our algorithm. The remaining difference is essentially due to the estimation and simulation error of our approach, which can be further reduced by increasing the simulations or the complexity of the neural network at the cost of higher computational time.
\begin{table}[tbp]
	\centering
	\begin{tabular}{r|r|rr}
		\hline
	Time	& \multicolumn{3}{c}{Residuals} \\
	\cmidrule(r{4pt}){2-4}  & VaR & K-B error & dTVaR \\ 
		\hline
		1 &  0.054 &5.905 &4.243\\ 
		2 & 1.108 &50.284&50.245\\ 
		3 &  -0.031 &29.518&29.518\\ 
		4 & 1.280 & 28.183 &28.076\\ 
		5 &  1.041 &28.307&28.228\\ 
		6 & -2.228 &29.712&29.406\\ 
		7 &  0.966 &29.840&29.763\\ 
		8 &  0.325 &30.685& 30.674\\ 
		9 &  -2.306 &33.829& 33.323\\ 
		10 & -1.462 &  38.812&38.629\\ 
		\hline
	\end{tabular}
\caption{This table reports Value-at-Risk (VaR), Koenker-Bassett error (K-B error) and Tail Value-at-Risk deviation (dTVaR) of quantile hedging residuals at confidence level $\alpha=0.95$.} \label{residuals}
\end{table}

\section{Concluding remarks}\label{conclusion}

We discussed the fair valuation of insurance liabilities in a multi-period discrete-time setting. As insurance liabilities are not directly traded in the financial market, the valuation requires a decomposition into a ``hedgeable part" and a ``residual part". For the first part, it seems that the quadratic objective has become a standard practice probably due to its analytical tractability and the fact that the resulting hedging portfolio targets the expected liability (\citealp{pelsser2016difference}). However, there is still an open debate on how to appropriately treat the residual part and define an appropriate ``risk margin" (see e.g. \citealp{pelkiewicz2020review} for a review on the Solvency II risk margin). In the literature, different approaches were considered to value the residual risk, either by an Esscher valuation operator \citep{deelstra2020valuation}, a standard-deviation principle \citep{barigou2020pricing,ahmad2020,ze2020,delong2019fair},  or a cost-of-capital principle \citep{pelsser2011}. 

Of the above valuation principles, the cost-of-capital approach takes account of the need to set up VaR-neutral portfolio; but in this approach VaR-neutrality becomes divorced from hedging considerations. Rather than using a cost-of-capital principle on the residual risk, we propose to quantile-hedge it. While both approaches lead to a VaR-neutral portfolio, our proposed approach has two noticeable advantages. First, the quantile hedging portfolio is a TVaR deviation risk minimiser and therefore better accounts for the tail risk (see Lemma \ref{lemmatvar}, Examples \ref{example: regulatory arbitrage}, \ref{exampleELoneperiod}). Second, our quadratic-quantile approach shows that the residual risk can still be partially hedged if one switches from a quadratic to a quantile hedging objective. This is especially relevant when there is a non-linear relationship between the insurance liability and traded instruments.

Moreover, we proposed a simulation-based general algorithm for the practical implementation of our approach. In this paper, we focused on a neural network implementation for quantile hedging but the algorithm can be easily adapted for a general loss function $\ell$ and other non-linear optimisers. This paper focused on the multi-period hedging of a cash-flow with maturity time $T$. Finally, this paper did not explicitly account for capital injections and withdrawals from the shareholders’ viewpoint and their option to default as considered e.g. in \cite{engsner2021multiple}. These points are left for future research.

\section{Acknowledgements} 

\textcolor{black}{The authors would like to thank the Editor, three anonymous referees and Jan Dhaene who provided useful and detailed comments that substantially improved the current manuscript.} Karim Barigou acknowledges the financial support of the Joint Research Initiative on ``Mortality Modeling and Surveillance” funded by AXA Research Fund.

\end{document}